\title{Practical sampling schemes for quantum phase estimation}
\author{E. van den Berg}
\affil{IBM T.J.~Watson, Yorktown Heights, NY, USA}
\newtheorem{theorem}{Theorem}[section]
\newcommand{\sfrac}[2]{\ensuremath{\textstyle\frac{#1}{#2}}}
\newcommand{\ket}[1]{\ensuremath{\vert{#1}\rangle}}
\newcommand{\abs}[1]{\ensuremath{\vert{#1}\vert}}
\providecommand{\customgenericname}{}
\newcommand{\newcustomtheorem}[2]{%
  \newenvironment{#1}[1]
  {%
   \renewcommand\customgenericname{#2}%
   \renewcommand\theinnercustomgeneric{##1}%
   \innercustomgeneric
  }
  {\endinnercustomgeneric}
}
\begin{document}

\maketitle

\begin{abstract}
  In this work we consider practical implementations of Kitaev's
  algorithm for quantum phase estimation. We analyze the use of phase
  shifts that simplify the estimation of successive bits in the
  estimation of unknown phase $\varphi$. By using increasingly
  accurate shifts we reduce the number of measurements to the point
  where only a single measurements in needed for each additional
  bit. This results in an algorithm that can estimate $\varphi$ to an
  accuracy of $2^{-(m+2)}$ with probability at least $1-\epsilon$
  using $N_{\epsilon} + m$ measurements, where $N_{\epsilon}$ is a
  constant that depends only on $\epsilon$ and the particular sampling
  algorithm. We present different sampling algorithms and study the
  exact number of measurements needed through careful numerical
  evaluation, and provide theoretical bounds and numerical values for
  $N_{\epsilon}$.
\end{abstract}

\section{Introduction}\label{Sec:Introduction}

Given a unitary operator $U$ and one of its eigenvectors $\ket{\xi}$,
we would like to obtain an accurate estimate of the corresponding
eigenvalue $\lambda$, which, as a result of unitarity, can be written
as $\lambda = e^{i2\pi \varphi}$ with $\varphi \in [0,1)$.  The
quantum phase estimation problem considers finding an estimate of the
phase $\varphi$, from which the estimate of $\lambda$ is then easily
found. The importance of quantum phase estimation is highlighted by
the wide range of applications that rely on it, including Shor's prime
factorization algorithm~\cite{SHO1997a}, quantum
chemistry~\cite{ASP2005DLHa,WHI2011BAa,OMA2016BKRa}, and quantum
Metropolis sampling~\cite{TEM2011OVPa}.

\begin{figure}[t]
\centering
\includegraphics[width=.38\textwidth]{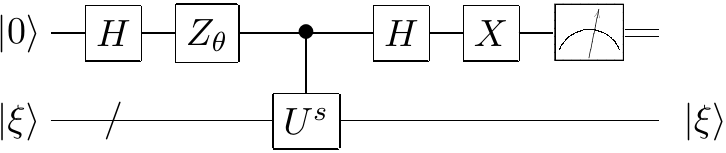}
\caption{Quantum circuit for phase estimation of $U^s$.}\label{Fig:Circuit}
\end{figure}

The two main approaches to quantum phase estimation are the
Fourier-based approach described
in~\cite{CLE1998EMMa,KIT1995a,NIE2010Ca}, and Kitaev's
algorithm~\cite{KIT1995a,KIT2002SVa}, which we study in this work. The
quantum circuit central to Kitaev's algorithm is illustrated in
Figure~\ref{Fig:Circuit}. Assuming an initial state of
$\ket{0}\ket{\xi}$, the circuit first applies a Hadamard operation on
the first qubit, followed by a phase-shift operation
\[
Z_{\theta} = \left[\begin{matrix}1 & 0 \\ 0 & e^{i2\pi \theta}\end{matrix}\right].
\]
The circuit then applies $U^s$ on the $\ket{\xi}$ qubits, conditioned
on the first qubit, where
\begin{equation}\label{Eq:Us}
U^s\ket{\xi} = \lambda^s\ket{\xi} = e^{i2\pi (s\varphi)}\ket{\xi}.
\end{equation}
Finally, a second Hadamard operation is applied on the first qubit
followed by a {\sc{not}} gate\footnote{The {\sc{not}} gate is inserted
  only to simplify exposition, as it allows us to focus on
  measurements with value 1 instead of 0, which will be more
  natural. In practice, the measurements could be negated. All
  algorithms presented in this paper apply equivalently to non-negated
  measurements with minor changes.} and measurement. The circuit
performs the following mapping:
\[
\ket{0}\otimes\ket{\xi} \rightarrow \left(\frac{1 - e^{i2\pi (s\varphi
    + \theta)}}{2}\ket{0} + \frac{1 + e^{i2\pi (s\varphi
    + \theta)}}{2}\ket{1}\right)\otimes\ket{\xi}
\]
Measuring the first qubit therefore returns 1 with probability
\[
P_{\theta}(1 \mid \varphi) = \left\vert \frac{1+ e^{i2\pi(\varphi + \theta)}}{2}\right\vert^2
= \frac{1 + \cos(2\pi (\varphi + \theta))}{2},
\]
and 0 otherwise. By appropriately choosing $\theta$ we can obtain
information on the sine and cosine of $\varphi$, since
\begin{equation}\label{Eq:CosSin}
\cos(2\pi\varphi) = 2P_{0}(1 \mid \varphi) - 1,\quad\mbox{and}\quad
\sin(2\pi\varphi) = 2P_{-1/4}(1 \mid \varphi) - 1.
\end{equation}
Repeated measurements of the quantum circuit with $\theta=0$ and
$\theta=-1/4$ allow us to approximate $P_0$ and $P_{-1/4}$, from which
the sine and cosine values and subsequently $\tilde{\varphi}$ can then
be determined.

Throughout this work we use the convention that angles $\varphi$,
$\theta$, and $\omega$ are expressed in radians divided by $2\pi$,
whereas angles $\alpha$ are always expressed in radians. We frequently
use
\[
p_x(\alpha) = \frac{1 + \cos(\alpha)}{2},\quad\mathrm{and}\quad
p_y(\alpha) = \frac{1 + \sin(\alpha)}{2},
\]
and write $p_x$ and $p_y$ when the dependency on $\alpha$ is
clear.

\section{Kitaev's algorithm}

The goal in Kitaev's algorithm~\cite{KIT2002SVa} is to obtain the
approximation
\[
\tilde{\varphi} = .\overline{\beta'_1\beta'_2\cdots
  \beta'_{m}\beta'_{m+1} \beta'_{m+2}} = \sum_{j=1}^{m+2}2^{-j}\beta'_j,
\]
for $\varphi = .\overline{\beta_1\beta_2\beta_3\cdots}$, such that
$\vert \varphi - \tilde{\varphi}\vert \leq 2^{-(m+2)}$ holds with
probability at least $1 - \epsilon$. The key principle behind the
algorithm lies in the fact that using $U^s$ in \eqref{Eq:Us} with
$s=2^{j-1}$ gives measurements about $s\varphi$, and therefore amounts
to shifting the bits in $\varphi$ to the left by $j-1$ positions:
\[
\varphi_j := 2^{j-1}\varphi \equiv
.\overline{\beta_{j}\beta_{j+1}\beta_{j+2}\cdots}\ \mathrm{mod}\ 1.
\]
The multiplicative factor of $2\pi$ causes the phase to be invariant
to the integer part $\overline{\beta_{1}\cdots\beta_{j-1}}$, which can
therefore be omitted. As a result, by choosing $j$ we can work with
the bitstring starting from any $\beta_{j}$. Using this, the first
step of Kitaev's algorithm is to choose $j=m$ and estimate
$\tilde{\varphi}_j = .\overline{\beta'_{m}\beta'_{m+1}\beta'_{m+2}}$
such that $\vert\varphi_j - \tilde{\varphi}_j\vert \leq 1/8$ with
probability at least $1 - \bar{\epsilon}_m$. This is done by taking
sufficiently many samples (discussed in detail in
Section~\ref{Sec:Sampling}) to estimate $\cos(2\pi \varphi_j)$ and
$\sin(2\pi \varphi_j)$ such that the approximated angle $\omega_j$
deviates from $\varphi_j$ by at most $1/16$ with probability at least
$1-\bar{\epsilon}_m$. The angle $\omega_j$ is then quantized (rounded)
to the nearest integer multiple of $1/8$ modulo 1 to obtain
$\tilde{\varphi}_j$. With a maximum quantization error of $1/16$, this
amounts to an accuracy of at least $1/8$.

The second stage of the algorithm iteratively adds one new bit
$\beta'_{j}$ per step, for $j=m-1,\ldots,1$. For each step we first
obtain a $1/16$ accurate approximation $\omega_j$ of $\varphi_j$ with
probability at least $1-\bar{\epsilon}_j$ using the technique outlined
above. Next, we enforce consistency using the bitstring known so far
and set
\[
\beta'_{j} := \begin{cases}%
0 & \mathrm{if}\ \vert.\overline{0\beta'_{j+1}\beta'_{j+2}} -
\mbox{$\omega_j$}\vert < 1/4,\\
1 & \mathrm{if}\ \vert.\overline{1\beta'_{j+1}\beta'_{j+2}} -
\mbox{$\omega_j$}\vert < 1/4.\end{cases}
\]
Using a union bound on the error probabilities, it can be seen that
the algorithm succeeds with probability at least
$1 - \sum_{j=1}^m\bar{\epsilon}_j$. Choosing
$\epsilon_{j} = \epsilon/m$ we can therefore ensure with probability
at least $1-\epsilon$ that all approximate angles $\omega_j$ are
valid, in which case $\tilde{\varphi}$ will be $2^{-(m+2)}$
accurate. A graphical illustration of Kitaev's algorithm is given in
Figure~\ref{Fig:Kitaev}.

\begin{figure}
\centering
\begin{tabular}{cc}
\includegraphics[width=0.4\textwidth]{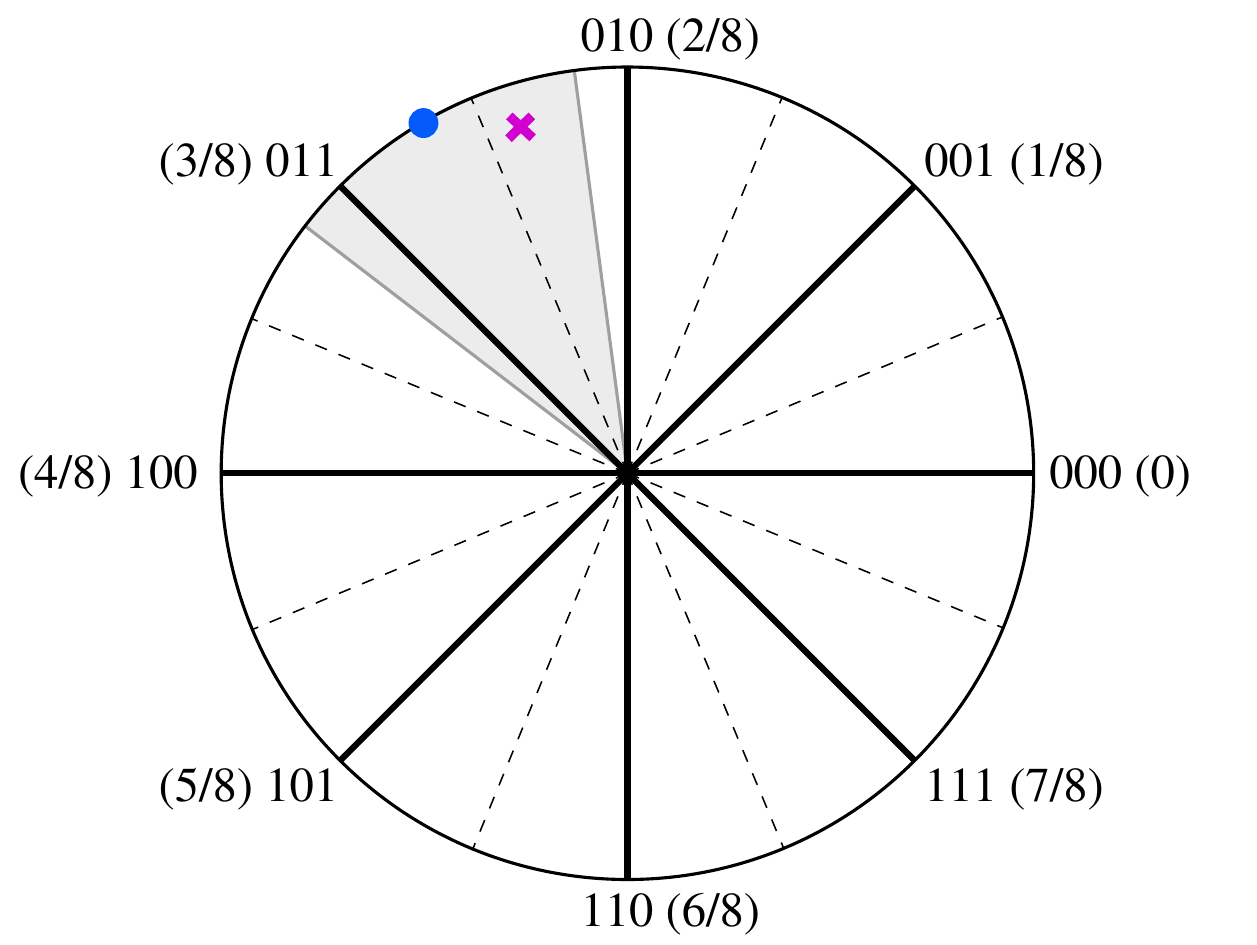}&
\includegraphics[width=0.4\textwidth]{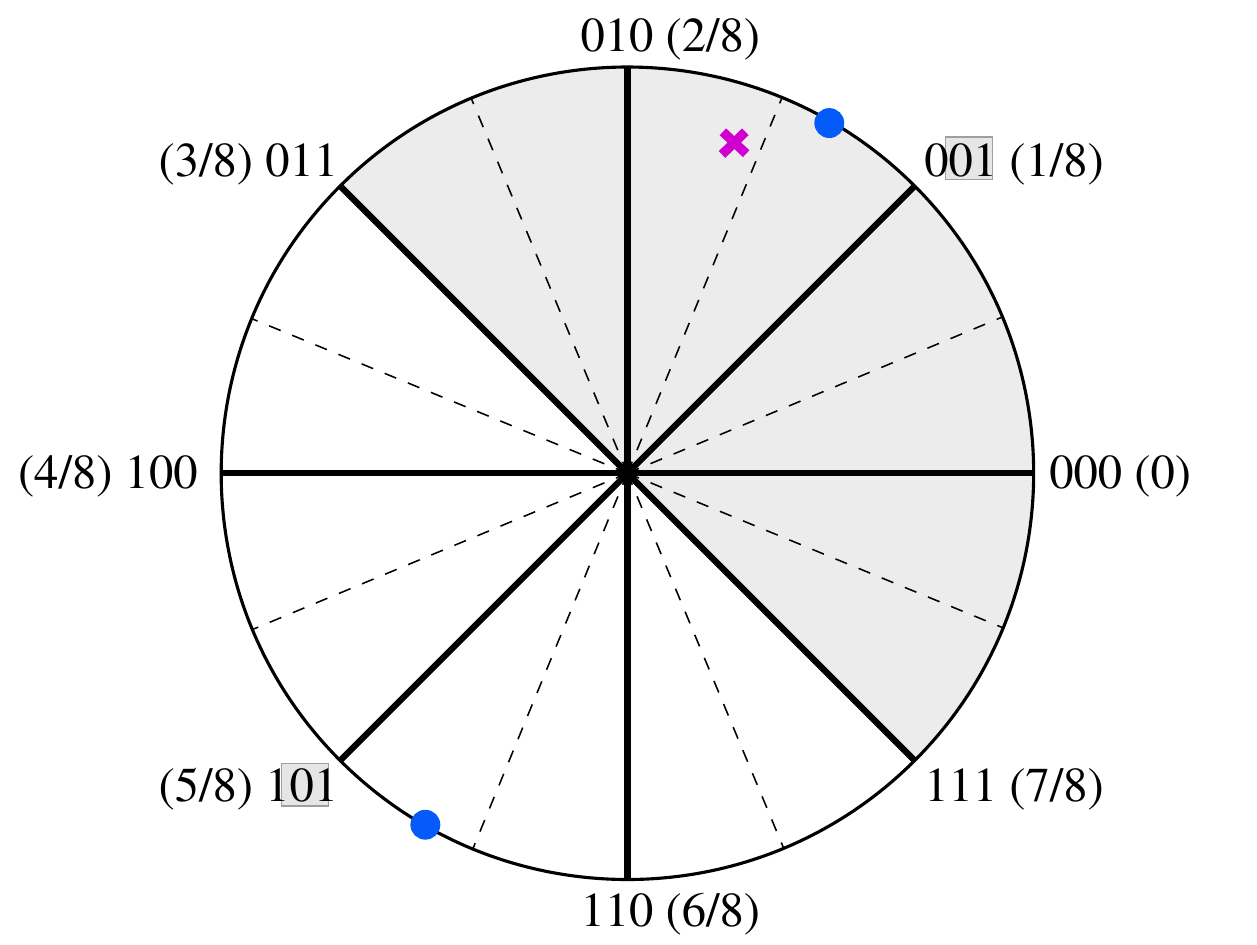}\\
({\bf{a}}) & ({\bf{b}})
\end{tabular}
\caption{({\bf{a}}) Estimation of the angle $\phi_j$ indicated by the
  blue dot is done by obtaining estimates of the sine and cosine
  values (indicated by the purple cross), from which an approximate
  angle $\omega_j$ and quantization ($\underline{01}$0) can be
  obtained. Angle $\phi_j$ is equal to $2\phi_{j-1}$ modulo 1, and
  could therefore originate from either of the blue dots in plot
  ({\bf{b}}). Fixing the last two bits to the leading two from the
  previous estimates gives points closest to $0\underline{01}$,
  indicated by the shaded region, or closest to
  $1\underline{01}$. Measuring the purple cross determines
  $\beta'_j = 0$.  }\label{Fig:Kitaev}
\end{figure}

The estimation of the sine or cosine terms in \eqref{Eq:CosSin} with
accuracy $\delta$ requires estimation of the probability terms to
accuracy $\delta/2$. Let $s_1,\ldots,s_n$, be i.i.d.\ samples of a
Bernoulli distribution with probability of success
$p = 1 - P_{\theta}(1)$. Denoting
$\tilde{p}_n = (1/n)\sum_{i=1}^n s_i$, then it follows from the
Chernoff bound that
\begin{equation}\label{Eq:Chernoff}
\mathrm{Pr}[\vert p - \tilde{p}_n\vert \geq \delta/2] \leq 2e^{-\delta^2n/2}.
\end{equation}
We require that the probability $2e^{-\delta^2n/2}$ be bounded by
$\bar{\epsilon}$, which is guaranteed for
\begin{equation}\label{Eq:MChernoff}
n \geq n(\delta,\epsilon) =\frac{2}{\delta^2} \log(2/\bar{\epsilon}).
\end{equation}
For the theoretical complexity of Kitaev's algorithm, note that $m$
angle estimations $\omega_j$ are used, each requiring approximate sine
and cosine values, therefore yielding a total of $2m$ estimations. By
choosing $\bar{\epsilon} = \epsilon / 2m$ we obtain an overall sample
complexity of
$\frac{4m}{\delta^2}\log(4m/\epsilon) =
\mathcal{O}(m\log(m/\epsilon))$.

\section{Improvements using phase shifts}\label{Sec:PhaseShift}

As a practical improvement to steps in the second stage of Kitaev's
algorithm, consider the situation shown in
Figure~\ref{Fig:Kitaev}(b). The exact value of $\varphi_j$ is equal to
either $(0 + \varphi_{j+1})/2$, or $(1+\varphi_{j+1})/2$, as indicated
by the blue dots. Using the first two bits of the binary
representation for $\tilde{\varphi}_{j+1}$, in this case 01, we
therefore know that $\varphi_{j}$ is at most 1/8 away from either 001
or 101. Using this information, we can first rotate $\varphi_{j}$ by
the reference angle, in this case 001, to obtain a new angle that is
1/8 away from either 000 or 100, as illustrated in
Figure~\ref{Fig:Rotation}(a). Now, instead of approximating both sine
and cosine we only need to determine the sign of the cosine, which
requires far fewer measurements. We then set $\beta'_j$ based on the
sign: if it is positive we set $\beta'_j = 0$, otherwise we set
$\beta'_j = 1$. We can further improve this scheme by maintaining all
known bits and rotate by 0010, instead of by the truncated version
001. Doing so we obtain an angle that is now at most 1/16 away from 0
or 1/2, as shown in Figure~\ref{Fig:Rotation}(b). By using the full
binary string $\tilde{\varphi}_{j+1}/2$ at each stage, we get
increasingly small deviations from 0 or 1/2, which increases the
magnitude of the cosine value and reduces the number of measurements
needed to accurately determine the correct sign.

\paragraph{Increasingly accurate rotations}

The use of existing measurements to correct or zero out portions of
the binary representation in iterative phase estimation was described
earlier by~\cite{CHI2000PRa,KNI2007OSa} along with its connection to
phase estimation based on the quantum Fourier transformation. Here we
analyze the measurement complexity in detail (see
also~\cite{DOB2007JSWa} for a high-level analysis). Given that the
bits $\beta'_j$ are determined in reverse order (that is, from
  least to most significant), we switch to working with iterations,
such that iteration $k \in [1,m]$ determines $\beta'_{m+1-k}$.  We now
consider iterations $k\geq 2$, which comprise the second stage of the
algorithm.  At each iteration in the second stage we need to determine
the sign of the cosine of the shifted angle. In terms of measurements,
this amounts to sampling a majority of either zeros or ones.  For the
number of measurements we have the following result.

\begin{figure}[t!]
\centering
\begin{tabular}{cc}
\includegraphics[width=0.4\textwidth]{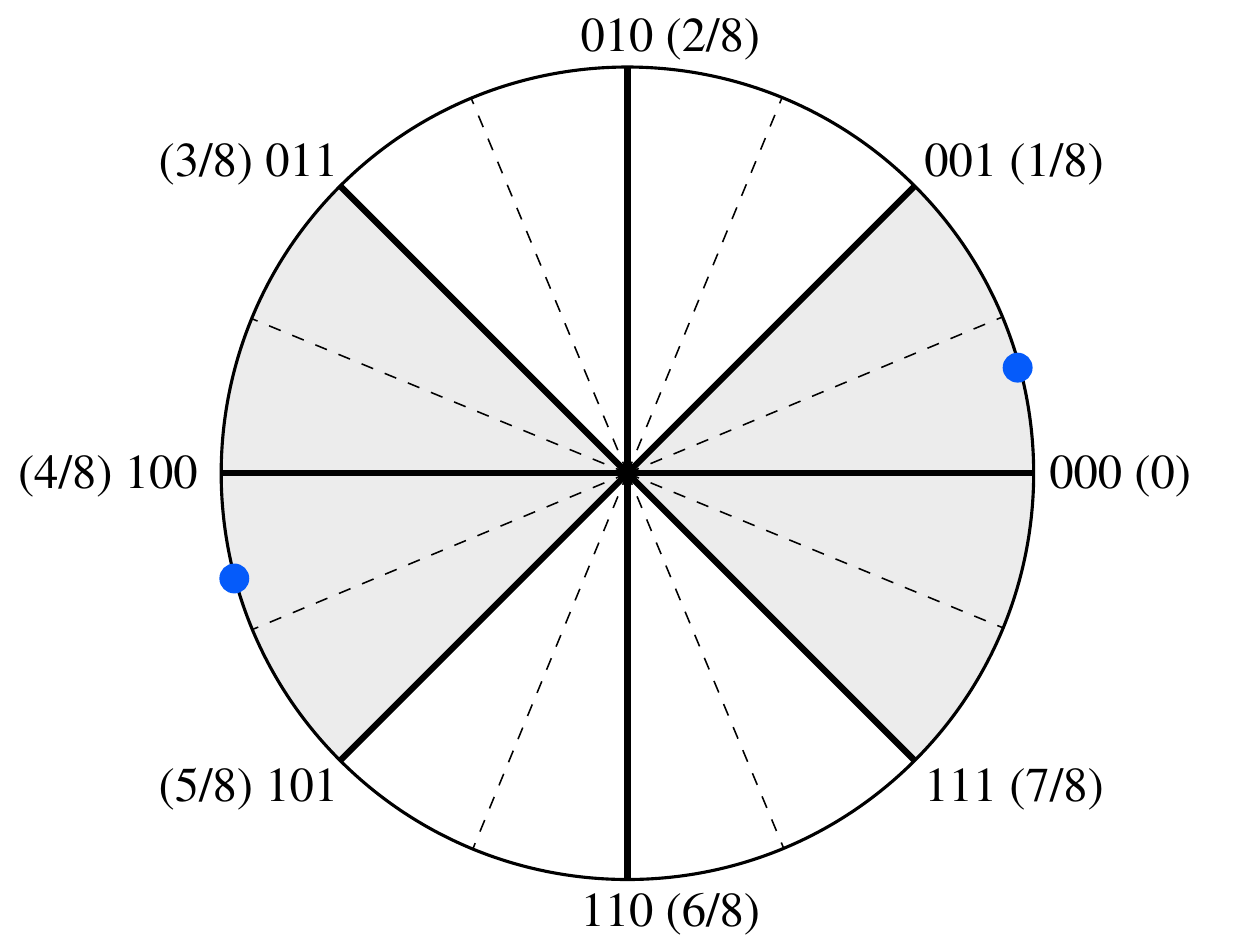}&
\includegraphics[width=0.4\textwidth]{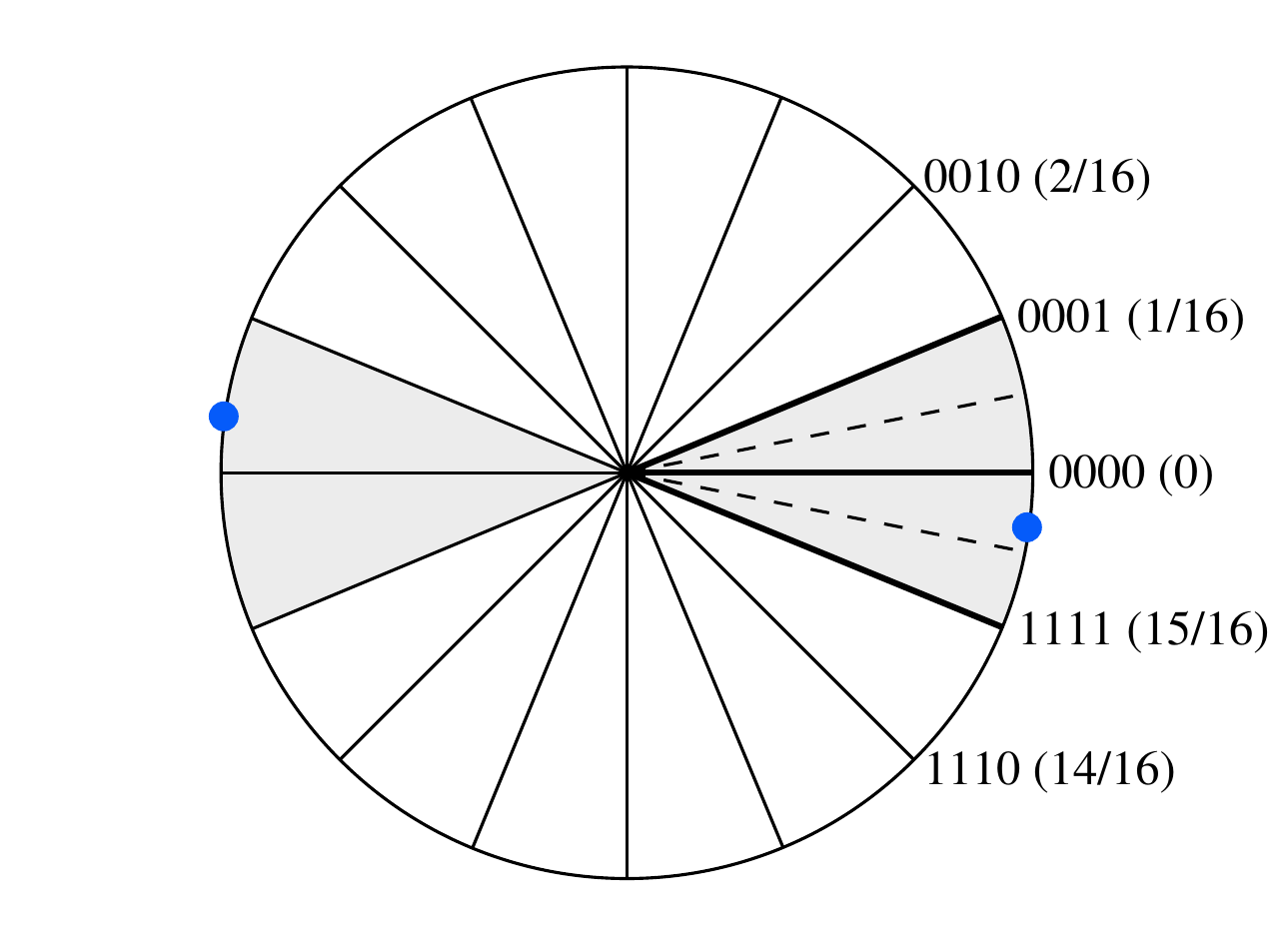}\\
({\bf{a}}) & ({\bf{b}})
\end{tabular}
\caption{Application of a phase shift ensures that the unknown angle
  (indicated by the blue dot) lies within one of the highlighted
  wedges centered on the horizontal axis. The more accurate the
  reference angle, the smaller the wedge, as shown in (a) with maximum
  deviation $1/8$, and (b) $1/16$.}\label{Fig:Rotation}
\end{figure}

\begin{theorem}\label{Thm:MeasurementsSign}
  Let $\alpha \in (0,\pi/2)$. Then we can correctly distinguish angles
  from the sets $[-\alpha,\alpha]$ and $[\pi-\alpha,\pi+\alpha]$ with
  probability at least $1-\epsilon$ by checking whether the majority
  of $n$ measurements is 1 or 0, whenever
\begin{equation}\label{Eq:MeasurementsSign}
n \geq  \frac{\log(1/\epsilon)}{\log(1/\sin(\alpha))}.
\end{equation}
\end{theorem}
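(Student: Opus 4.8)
The plan is to reduce the claim to bounding the error probability of a simple majority vote over $n$ independent Bernoulli trials, and then to apply a Chernoff bound whose optimal per-sample factor turns out to equal exactly $\sin(\alpha)$. First I would identify the relevant success probabilities. After the phase shift, an angle $a$ yields a measurement of $1$ with probability $p_x(a) = (1+\cos a)/2$. For $a \in [-\alpha,\alpha]$ we have $\cos a \geq \cos\alpha$, so $p_x(a) \geq (1+\cos\alpha)/2 > \half$, while for $a \in [\pi-\alpha,\pi+\alpha]$ we have $\cos a \leq -\cos\alpha$, so $p_x(a) \leq (1-\cos\alpha)/2 < \half$. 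Thus the two sets are separated by whether $1$ or $0$ is in the majority, and since $p_x$ is monotone in $|\cos a|$, the least-separated (worst) case in each set is attained at the boundary angles $a = \pm\alpha$ and $a = \pi\pm\alpha$, where $p_x$ equals $p := (1+\cos\alpha)/2$ or its complement $1-p$. By the symmetry $p \leftrightarrow 1-p$ it suffices to bound the error in the near-$0$ case, where, writing $S$ for the number of ones, $S \sim \mathrm{Binomial}(n,p)$ and the decision is wrong only when $S \leq n/2$.

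The key step that collapses the bound to the stated form is the identity
\[
p(1-p) = \frac{(1+\cos\alpha)(1-\cos\alpha)}{4} = \frac{1-\cos^2\alpha}{4} = \frac{\sin^2(\alpha)}{4},
\]
so that $2\sqrt{p(1-p)} = \sin(\alpha)$. I would then apply the standard exponential tail bound $\mathrm{Pr}[S \leq n/2] \leq \min_{t>0} e^{tn/2}\,(p e^{-t} + (1-p))^n = \min_{t>0} f(t)^n$, where $f(t) = p e^{-t/2} + (1-p)e^{t/2}$. Substituting $u = e^{t/2}$ and minimizing $p/u + (1-p)u$ gives the minimizer $u = \sqrt{p/(1-p)} > 1$ (valid precisely because $p > \half$), at which $f = 2\sqrt{p(1-p)} = \sin(\alpha)$. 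This yields $\mathrm{Pr}[S \leq n/2] \leq (\sin(\alpha))^n$, and the same bound covers the near-$\pi$ case by the symmetry noted above.

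Finally I would require $(\sin(\alpha))^n \leq \epsilon$. Since $\alpha \in (0,\pi/2)$ forces $0 < \sin(\alpha) < 1$ and hence $\log(1/\sin(\alpha)) > 0$, taking logarithms rearranges this to $n \geq \log(1/\epsilon)/\log(1/\sin(\alpha))$, which is exactly \eqref{Eq:MeasurementsSign}. I do not expect a serious obstacle here; the one genuinely non-routine observation is that the Chernoff-optimal factor simplifies to $\sin(\alpha)$, which is what makes the final bound so clean. The only point requiring mild care is the tie case for even $n$, which the conservative event $\{S \leq n/2\}$ already absorbs, so the bound holds regardless of how a split vote is broken.
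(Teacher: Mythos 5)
Your proposal is correct and takes essentially the same route as the paper: both bound the probability of a minority vote by a Chernoff bound at the midpoint $k=n/2$ and collapse it via the identity $4p(1-p)=\sin^2(\alpha)$, and indeed your explicitly optimized MGF bound $\bigl(2\sqrt{p(1-p)}\bigr)^n$ is numerically identical to the paper's cited relative-entropy form $\exp\bigl(-nD(1/2 \,\Vert\, p)\bigr)$. The only differences are cosmetic: you derive the tail bound from scratch rather than citing it, and you make explicit the reduction of the worst case to the boundary angles $\pm\alpha$ and the tie-breaking for even $n$, which the paper leaves implicit.
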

\begin{proof}
  Assume that unknown angle lies in the range $[-\alpha,\alpha]$ and
  denote by $p = (1 + \cos(\alpha))/2$. The probability that at most
  $k$ out of $n$ measurements are 1, with $k < np$ is bounded
  by~\cite{ARR1989Ga}:
% -----------------------------------------------------
% https://en.wikipedia.org/wiki/Binomial_distribution#Tail_bounds
% R. Arratia and L. Gordon: Tutorial on large deviations for the
% binomial distribution, Bulletin of Mathematical Biology 51(1)
% (1989), 125–131 [2]. See also Theorem 11.1.3 in Cover, T.; Thomas,
% J. (2006). Elements of Information Theory (2nd ed.). Wiley. p. 350.
% -----------------------------------------------------
\begin{equation}\label{Eq:SumBound}
\mathrm{Pr}(X \leq k) \leq \exp\left(-nD(k/n \Vert p)\right),
\end{equation}
where $D(a \Vert p)$ denotes the relative entropy
\[
D(a \Vert p) = a\log\left(\frac{a}{p}\right) + (1-a)\log\left(\frac{1-a}{1-p}\right).
\]
We want the majority of the measurements to be 1 and an error
therefore occurs whenever $k \leq n/2$. Choosing $k=n/2$, which is
allowed since $p>1/2$, gives $a = 1/2$ and an error $\epsilon'$ bounded by
\begin{equation}\label{Eq:SignEpsilon}
\epsilon' \leq \exp\left(-\frac{n}{2}\log(1/(4p(1-p)))\right)
\end{equation}
To simplify, note that
\begin{equation}\label{Eq:4p(1-p)}
4p(1-p) = 4\left(\frac{1 + \cos(\alpha)}{2}\right)\left(\frac{1 - \cos(\alpha)}{2}\right) = 1 -
\cos^2(\alpha) = \sin^2(\alpha).
\end{equation}
We want to ensure that the right-hand side of \eqref{Eq:SignEpsilon}
is less than or equal to $\epsilon$. Taking logarithms and simplification then
gives the desired result. The result for $[\pi-\alpha,\pi+\alpha]$
follows similarly.
\end{proof}

At iteration $k$ we apply a phase shift based on the angle from the
previous iteration, and Theorem~\ref{Thm:MeasurementsSign} therefore
applies with $\alpha$ equal to $\alpha_k = \pi/2^{k+1}$ (as an
example, at iteration $k=2$ we have a maximum deviation of 1/16 or
$\pi/8$). When taking a single measurement, the probability of failure
is $1-p_k$ where $p_k = p_x(\alpha_k)$. In the special case where $k$
is such that $1 - p_k \leq \bar{\epsilon}$, it is therefore follows
that only a single sample is needed.  It holds that
\begin{eqnarray}
1 - p_k 
&=& (1 - \cos(\pi/2^{k+1})) / 2\notag\\
&=& \sin^2(\pi/2^{k+2})\notag\\
&\leq& \pi^2/2^{2(k+2)},\label{Eq:Bound_1-pk}
\end{eqnarray}
where we use the identity $1 - \cos(x) = 2\sin^2(x/2)$ in the second
line and $\sin(x) \leq x$ for $x\geq 0$ in the last. We want to find
the value of $k$ such that iterations $k$ through $m$ each require
only a single measurement and have a combined error bounded by
$\bar{\epsilon}$.  Choosing $\bar{\epsilon} = \epsilon/k$, gives the requirement
\[
\sum_{j=k}^{m}(1-p_j) \leq \epsilon/k
\]
Bounding the left-hand side as
\begin{equation}\label{Eq:UnlimitedIteration}
\sum_{j=k}^{m} (1-p_j) \leq \sum_{j=k}^{m}\pi^2 / 2^{2(j+2)} \leq
\pi^2 \sum_{j=k}^{\infty} (1/4)^{j+2} =
\frac{4\pi^2}{3}\left(\frac{1}{4}\right)^{k+2} = \frac{\pi^2}{12}4^{-k},
\end{equation}
we obtain the sufficient condition
\begin{equation}\label{Eq:SufficientK}
4^{-k} \leq \frac{12\epsilon}{k\pi^2}.
\end{equation}
Taking base-two logarithm and rearranging gives
\[
2k -  \log_2(k/12) \geq \log_2(\pi^2 / \epsilon).
\]
It can be verified that $\log_2(k/12) < k/22$ for $k\geq 0$, and we
can therefore choose
\begin{equation}\label{Eq:BoundK}
k \geq k_\epsilon := \left\lceil\frac{22}{43}\log_2(\pi^2/\epsilon)\right\rceil\!.
\end{equation}

The value of $k_{\epsilon}$ does not depend on $m$ and satisfies that
for $k_{\epsilon} \geq 2$ for all $\epsilon \in (0,1]$. The bound on
the sum of errors for iterations $k \geq k_{\epsilon}$
in~\eqref{Eq:UnlimitedIteration} can be seen to apply for any $m$.
Denote by $N_{\epsilon}$ the total number of measurements taken in the
first $k_{\epsilon}-1$ iterations, each with an error not exceeding
$\bar{\epsilon}$.  When $m < k_{\epsilon}$ it is clear that at most
$N_{\epsilon}$ samples are needed. When $m \geq k_{\epsilon}$ we need
to take an additional sample for each of the remaining
$m - k_{\epsilon}+1$ steps. The overall sampling complexity is
therefore bounded by $N_{\epsilon} + m$, where $N_{\epsilon}$ depends
only on $\epsilon$ and the sampling methods used for the first
$k_{\epsilon}-1$ iterations. The work by Svore {\it{et
    al.}}~\cite{SVO2014HFa} proposes a phase estimation algorithm with
  complexity $\mathcal{O}(m\log^*(m))$. Unlike the proposed method,
  however, their algorihm allows parallelization and the use of
  clusters, and does not require arbitrarily accurate phase shifts,
  which can be expensive from a circuit perspective (see the
  Discussion section for more details).

\section{Practical sampling schemes}\label{Sec:Sampling}

In this section we consider different sampling schemes and study the
number of samples needed to attain a desired accuracy with a given
error rate. For the evaluation of the error rate, we consider the
measurements as a Binomial random variable by counting the number of
successful measurements (which could be either 0 or 1, depending on
the context). In order to determine the angle to a certain accuracy
using $n$ measurements, the number of successful measurements $X$
typically needs to fall in some set $\mathcal{K} \subseteq [n]$, where
$[n]$ denotes the set $\{0,1,\ldots.n\}$. For $p = p_x(\alpha)$ this is satisfied with
probability
\begin{equation}\label{Eq:Binomial1}
\mathrm{Pr}(\mathcal{K} \mid n,\alpha) = \sum_{k \in\mathcal{K}} {n \choose k}p^k(1-p)^{n-k}.
\end{equation}
We also require two-dimensional settings with probabilities $p_x$ and
$p_y$ and $\mathcal{K}\in[n]\times[n]$, given by
\begin{equation}\label{Eq:Binomial2}
\mathrm{Pr}(\mathcal{K} \mid n,\alpha) = \sum_{(k_x,k_y) \in
  \mathcal{K}} {n \choose k_x}{n \choose k_y}p_x^{k_x}(1-p_x)^{n-k_x}p_y^{k_y}(1-p_y)^{n-k_y}.
\end{equation}
The error rate is then given by $1-\mathrm{Pr}(\mathcal{K})$, and the
goal is to find the minimum $n$ for which the error is below some
threshold $\bar{\epsilon}$. In addition to providing bounds, we use
the GNU multi-precision arithmetic
library\footnote{\texttt{http://gmplib.org/}} to evaluate the
probabilities in~\eqref{Eq:Binomial1} and~\eqref{Eq:Binomial2}
numerically, thus allowing us to find the exact minimum number of
samples needed to attain the desired error level for each of the
methods. The best sampling schemes are then used in
Section~\ref{Sec:Nepsilon} to obtain numerical
values for $N_{\epsilon}$.

\begin{figure}[t]
\centering
\begin{tabular}{ccc}
\includegraphics[height=150pt]{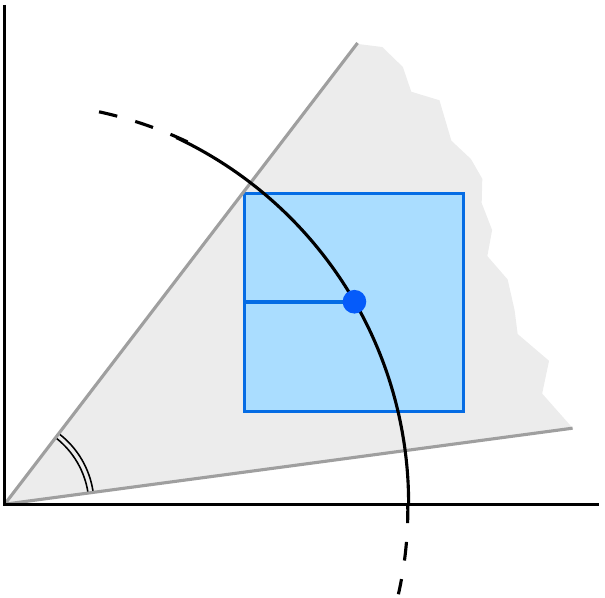}\begin{picture}(0,0)
\put(-125,37){$2\eta$}
\put(-80,62){$\delta_{\phi}$}
\put(-57,73){$\phi$}
\put(-40,30){$\phi-\eta$}
\put(-120,100){$\phi+\eta$}
\end{picture}& \hspace*{12pt}&
\includegraphics[height=150pt]{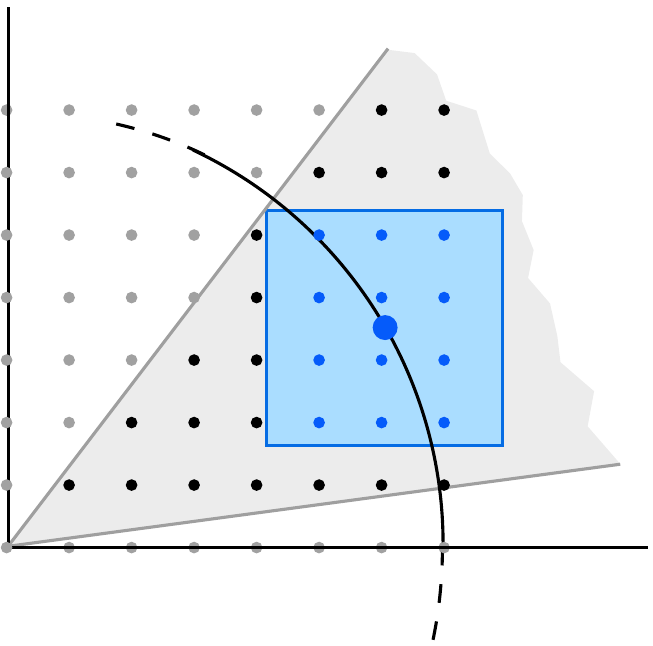}\\
({\bf{a}}) && ({\bf{b}})
\end{tabular}
\caption{Estimation of angle $\phi$ with accuracy $\eta$ (a) is
  guaranteed when the sine and cosine terms are each estimated with
  precision $\delta_{\phi}$, where the subscript denotes the
  dependence on $\phi$; (b) the same estimation based on discrete
  points during sampling, in this case using 15 samples each for sine
  and cosine (remaining three orthants omitted).}\label{Fig:Delta}
\end{figure}

\subsection{Box-based sine and cosine}\label{Sec:BoxBased}

\begin{figure}
\centering
\begin{tabular}{cc}
\includegraphics[width=0.475\textwidth]{./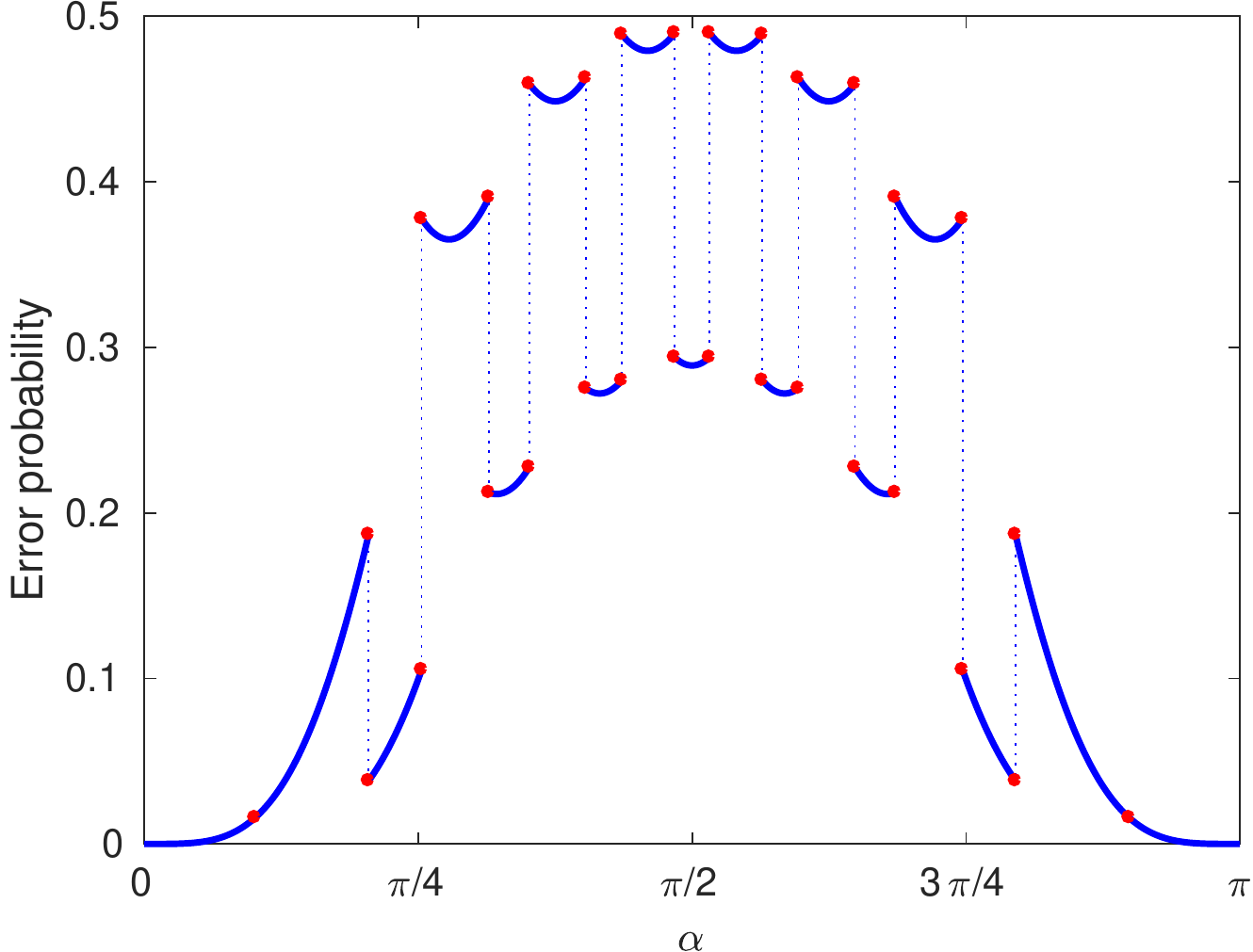}&
\includegraphics[width=0.475\textwidth]{./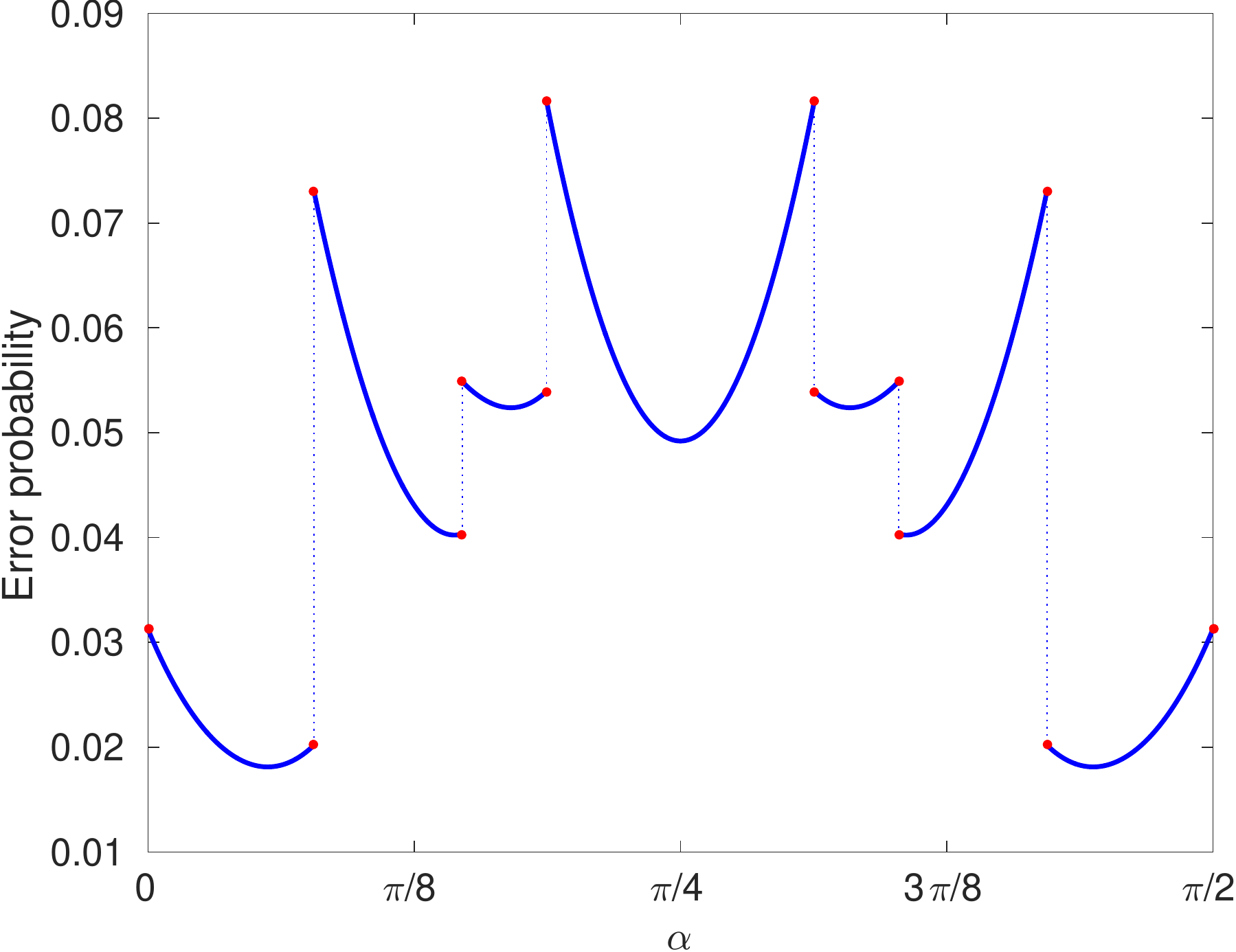}\\
({\bf{a}}) & ({\bf{b}})
\end{tabular}
\caption{Error probability for ({\bf{a}}) estimating cosine
  for different angles for $n=8$ using a bounding box with $\delta =
  0.3$; ({\bf{b}}) error probability for sampling different angles
  using the wedge-based approach with $n=5$ and $\eta=\pi/4$.}\label{Fig:ErrorJumps}
\end{figure}

The first sampling method we look at the is box-based scheme discussed
in \cite[Section 13.5.2]{KIT2002SVa} and illustrated in
Figure~\ref{Fig:Delta}(a). The idea is to independently estimate
$\cos(2\pi\varphi)$ and $\sin(2\pi\varphi)$ to an accuracy $\eta$,
such that the recovered angle differs no more than $\eta$ (in radians)
from the actual angle, with probability at least $1-\epsilon$. The
following theorem gives the maximum deviation $\delta(\eta)$ allowed
in the sine and cosine estimates to reach the desired accuracy in the
angle (a proof of the theorem is given in Appendix~\ref{Sec:ProofThmDelta}):

\begin{theorem}\label{Thm:Delta}
  For any $0 \leq \eta \leq \pi/2$ we can compute an estimate
  $\tilde{\phi}$ of any $\phi \in [0,2\pi]$ with accuracy
  $\vert \tilde{\phi} - \phi\vert \leq \eta$ from sine and cosine
  estimates $\tilde{c}$ and $\tilde{s}$ with $\vert \tilde{c} -
  \cos(\phi)\vert \leq \delta$ and $\vert\tilde{s}
  -\sin(\phi)\vert \leq \delta$, whenever 
\begin{equation}%\label{Eq:BoundDelta}
\delta \leq \delta(\mbox{$\eta$}) = \frac{\sin(\eta)}{\sqrt{2}}.
\end{equation}
For uniform estimation over $\phi$ this bound is tight.
\end{theorem}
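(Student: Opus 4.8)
The plan is to recast the statement geometrically. The estimates satisfying $\abs{\tilde{c}-\cos\phi}\leq\delta$ and $\abs{\tilde{s}-\sin\phi}\leq\delta$ say precisely that the measured point $(\tilde{c},\tilde{s})$ lies in the axis-aligned square $B_\delta(\phi)$ of half-width $\delta$ centred at $P=(\cos\phi,\sin\phi)$, and the natural estimator is the polar angle $\tilde\phi=\mathrm{atan2}(\tilde{s},\tilde{c})$. Thus $\abs{\tilde\phi-\phi}\leq\eta$ holds for \emph{every} admissible estimate exactly when $B_\delta(\phi)$ is contained in the angular wedge $W$ consisting of all points whose polar angle lies within $\eta$ of $\phi$. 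I would first observe that, because $\eta\leq\pi/2$, the wedge $W$ has opening angle $2\eta\leq\pi$ and is therefore a convex cone, namely the intersection of the two half-planes $H_+$ and $H_-$ bounded by the rays through the origin in directions $\phi+\eta$ and $\phi-\eta$. Hence $B_\delta(\phi)\subseteq W$ is equivalent to the two conditions $B_\delta(\phi)\subseteq H_+$ and $B_\delta(\phi)\subseteq H_-$.

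Next I would reduce each half-plane condition to a support-function inequality. For a half-plane whose inward unit normal is $n$ and whose boundary lies at distance $d$ from the centre $P$, the square $B_\delta(\phi)$ is contained in it iff its maximal extent in the direction $-n$ does not exceed $d$; since $\max_{\|v\|_\infty\leq\delta}\langle n,v\rangle=\delta\|n\|_1$, this reads $\delta\|n\|_1\leq d$. Each boundary ray passes through the origin at angular distance $\eta$ from $P$, and $P$ lies on the unit circle, so the perpendicular distance is $d=\sin\eta$; the unit normal to the ray in direction $\psi$ is $(-\sin\psi,\cos\psi)$, giving $\|n\|_1=\abs{\sin\psi}+\abs{\cos\psi}$ with $\psi=\phi\pm\eta$. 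The two containment conditions therefore become
\[
\delta\leq\frac{\sin\eta}{\abs{\sin(\phi+\eta)}+\abs{\cos(\phi+\eta)}},\qquad
\delta\leq\frac{\sin\eta}{\abs{\sin(\phi-\eta)}+\abs{\cos(\phi-\eta)}}.
\]

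To obtain a single $\delta$ valid uniformly in $\phi$ I would take the worst case. Using $1\leq\abs{\sin\psi}+\abs{\cos\psi}\leq\sqrt2$, with the upper bound attained precisely when $\psi\in\{\pi/4+k\pi/2\}$, the minimum over $\phi$ of the right-hand sides equals $\sin\eta/\sqrt2$, which yields the claimed $\delta(\eta)=\sin\eta/\sqrt2$. Tightness for uniform estimation then follows by exhibiting a $\phi\in[0,2\pi]$ making one denominator equal to $\sqrt2$, i.e.\ with $\phi+\eta\in\{\pi/4+k\pi/2\}$ (such a $\phi$ always exists for $\eta\leq\pi/2$): there the corner of the square in the offending direction lies exactly on the wedge boundary, so any larger $\delta$ pushes a corner outside $W$ and produces an angular error exceeding $\eta$. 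The main technical point to handle carefully is the support-function step—verifying that $\delta\|n\|_1$ is indeed the extent of the axis-aligned square and that the distance from $P$ to each boundary ray is $\sin\eta$—together with checking that for $\delta\leq\sin\eta/\sqrt2\leq 1/\sqrt2$ the square never reaches the origin, so that the polar angle remains well defined throughout.
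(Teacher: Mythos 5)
Your proposal is correct, and it reaches the paper's bound by a genuinely different route. Both arguments start from the same reduction---accuracy $\eta$ for every admissible $(\tilde c,\tilde s)$ is equivalent to the axis-aligned square of half-width $\delta$ centred at $(\cos\phi,\sin\phi)$ lying inside the wedge of half-angle $\eta$---but they diverge in how they extract the worst-case $\delta$. The paper identifies the binding corner of the box, writes the wedge boundary as $x=\alpha(\phi)y$ with $\alpha(\phi)=\cos(\phi+\eta)/\sin(\phi+\eta)$, obtains $\delta(\phi)=(\cos\phi-\alpha\sin\phi)/(1+\alpha)$, and then minimizes over $\phi$ by calculus: differentiating, simplifying with the identity $\alpha'=-1-\alpha^2$, locating the critical point $\phi^*=\pi/4-\eta$, and treating $0<\eta\le\pi/4$ and $\pi/4\le\eta\le\pi/2$ as separate cases after a symmetry reduction on $\phi$. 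You instead exploit convexity of the wedge (valid precisely because $2\eta\le\pi$): containment becomes two half-plane conditions, and the support function of the $\ell_\infty$ ball, $\max_{\|v\|_\infty\le\delta}\langle n,v\rangle=\delta\|n\|_1$, turns each into $\delta\,(\abs{\sin(\phi\pm\eta)}+\abs{\cos(\phi\pm\eta)})\le\sin\eta$. This is in fact the paper's $\delta(\phi)$ in disguise, since $(\cos\phi-\alpha\sin\phi)/(1+\alpha)$ simplifies to $\sin(\eta)/(\sin(\phi+\eta)+\cos(\phi+\eta))$, but your derivation reaches it without any differentiation, and the uniform minimum over $\phi$ is then immediate from $\abs{\sin\psi}+\abs{\cos\psi}\le\sqrt 2$ with equality at $\psi=\pi/4+k\pi/2$; the same observation delivers tightness. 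What your route buys is brevity, a single argument covering all $\eta\in[0,\pi/2]$ and all $\phi$ with no case analysis, and a transparent identification of the worst-case geometry; what the paper's computation makes slightly more explicit is the minimizing angle $\phi^*=\pi/4-\eta$ itself. Two caveats, both at the same level of informality as the paper's own proof: the tightness argument (yours and theirs) shows that the wedge criterion---equivalently the natural polar-angle estimator---fails for larger $\delta$, not that every conceivable estimator fails; and angular differences should be read modulo $2\pi$. Your closing check that the square stays away from the origin, so the polar angle is well defined, is a detail the paper glosses over entirely.
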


Estimating the cosine is equivalent to estimating the probability
$p = (1 + \cos(2\pi\varphi))/2$ with accuracy $\delta(\eta)/2$. When
taking $n$ measurements we can estimate the probability as $k/n$,
where $k$ is the number of measurements that are 1. The success set
$\mathcal{K}_{n,\delta}(p)$ is therefore defined as
\begin{equation}\label{Eq:SetKnp}
\mathcal{K}_{n,\delta}(p) := \{k\in[n] \mid (p-\delta/2)n \leq k\leq (p+\delta/2)n\},
\end{equation}
from which we can then evaluate
$\mathrm{Pr}(\mathcal{K}_{n,\delta}(p))$. One difficulty here is that
the probability $p$ depends on the unknown angle $2\pi\varphi$ and we
therefore need to consider the error rate for all possible
angles. Figure~\ref{Fig:ErrorJumps}(a) illustrates the error
probability $1-\mathrm{Pr}(\mathcal{K}_{n,\delta}(p))$ as a function
of angle for $\delta=0.3$, when taking eight measurements. Due to the
discrete nature of the samples there are numerous discontinuities,
which are best explained using Figure~\ref{Fig:Delta}(b). Consider the
box centered around a point on the circle at a given angle. For the
cosine we only need to consider the horizontal component and it may
therefore help to think of a projection of the box onto the horizontal
axis. Given an angle and corresponding probability $p$, the set
$\mathcal{K}_{n,p}$ then consists of all the points on the horizontal
axis within the projected box. As the angle increases, the box shifts,
thereby gradually adding and removing points from the set
$\mathcal{K}$ at critical angles. When looking at the limit as the
angle approaches a critical angle, we can take $p$ to be the
probability associated with the critical angle. It then follows from
\eqref{Eq:Binomial1} that the probability of success increases when a
point is added to the set, and decreases when a point is removed, and
vice versa for the error rate. Indeed, these discontinuities are
clearly seen in the error rate plotted in
Figure~\ref{Fig:ErrorJumps}(a).  The following theorem, which we prove
appendix~\ref{Sec:ProofThmBoxConvex}, shows that for sufficiently
large $n$, the error curve is piecewise convex in $p$:

\begin{theorem}\label{Thm:BoxConvex}
  Choose $\delta > 0$ and let $f_n(p) = 1 - \mathrm{Pr}(X \in
  \mathcal{K}_{n,\delta}(p))$. Then for $n \geq \max\{1 +
  1/\delta^2,3\}$, $f_n(p)$ is piecewise convex on $[0,1]$ with breakpoints at
  $[0,1] \cap \{(k/n) \pm \delta/2\}_{k\in[n]}$.
\end{theorem}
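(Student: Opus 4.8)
The plan is to reduce the global statement to a local, interval-by-interval analysis and then show that the second derivative of $f_n$ is nonnegative on each piece. First I would observe that $k \in \mathcal{K}_{n,\delta}(p)$ exactly when $p \in [k/n - \delta/2,\ k/n + \delta/2]$, so the set membership of each index toggles only at the claimed breakpoints $\{(k/n) \pm \delta/2\}_{k\in[n]}$. Consequently, on each open subinterval strictly between two consecutive breakpoints the set $\mathcal{K}_{n,\delta}(p)$ is a fixed block of consecutive integers, say $K = \{a, a+1, \ldots, b\}$ with $0 \le a \le b \le n$ (the empty case gives $f_n \equiv 1$, trivially convex). On such a subinterval $f_n(p) = 1 - g_K(p)$ with $g_K(p) = \sum_{k=a}^b \binom{n}{k}p^k(1-p)^{n-k}$ a genuine polynomial, so it suffices to prove $f_n''(p) \ge 0$ there. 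Note that $f_n$ need not be continuous across breakpoints (these jumps are exactly the discontinuities visible in Figure~\ref{Fig:ErrorJumps}(a)), but piecewise convexity only requires convexity on each open piece, with the closed pieces then following from continuity of the underlying polynomial.

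Second, I would differentiate $g_K$ in closed form. Writing $g_K = F_b - F_{a-1}$ for the binomial tail $F_j(p) = \sum_{k=0}^j \binom{n}{k}p^k(1-p)^{n-k}$ and using the telescoping identity $F_j'(p) = -n\binom{n-1}{j}p^j(1-p)^{n-1-j}$, I obtain $g_K'(p) = n[\beta_{a-1}(p) - \beta_b(p)]$, where $\beta_j(p) = \binom{n-1}{j}p^j(1-p)^{n-1-j}$ is the $\mathrm{Binomial}(n-1,p)$ mass function (with the convention $\beta_j \equiv 0$ for $j<0$ or $j>n-1$). Differentiating once more via $\beta_j'(p) = \beta_j(p)\,(j-(n-1)p)/(p(1-p))$ yields the compact expression
\begin{equation*}
f_n''(p) = \frac{n}{p(1-p)}\Bigl[\beta_b(p)\,\bigl(b - (n-1)p\bigr) - \beta_{a-1}(p)\,\bigl(a - 1 - (n-1)p\bigr)\Bigr].
\end{equation*}
Since the prefactor $n/(p(1-p))$ is positive on $(0,1)$, everything reduces to controlling the signs of the two factors $b - (n-1)p$ and $a - 1 - (n-1)p$.

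The crux, and the main obstacle, is to extract these two sign conditions from the hypothesis $n \ge \max\{1 + 1/\delta^2, 3\}$ together with the integer constraints defining $a$ and $b$. I would first record that $n \ge 1 + 1/\delta^2$ gives $\delta^2(n-1) \ge 1$, hence $\delta n \ge 2$: indeed $\delta \ge 1/\sqrt{n-1} \ge 2/n$ is equivalent to $n^2 \ge 4(n-1)$, i.e.\ $(n-2)^2 \ge 0$, which always holds. Next, on the subinterval $b$ is the largest integer with $b \le (p+\delta/2)n$ and $a$ the smallest with $a \ge (p-\delta/2)n$, so $(p+\delta/2)n < b+1$ and $a-1 < (p-\delta/2)n$. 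These give $b - (n-1)p > p + \tfrac{\delta n}{2} - 1 \ge p \ge 0$ and $a - 1 - (n-1)p < p - \tfrac{\delta n}{2} \le p - 1 \le 0$, using $\delta n/2 \ge 1$ and $0 \le p \le 1$. Thus the first factor is strictly positive and the second strictly negative, so in the bracket the first term is $\ge 0$ and the subtracted second term is $\le 0$; hence $f_n''(p) \ge 0$ throughout the subinterval, which is the required convexity.

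I expect the bookkeeping around the integer boundary to be the delicate part. In particular I would check the degenerate indices separately: when $a = 0$ the term $\beta_{a-1} = \beta_{-1}$ vanishes and only the nonnegative $b$-term survives, and when $b = n$ the term $\beta_b = \beta_n$ vanishes since $\beta$ is a $\mathrm{Binomial}(n-1,p)$ mass, leaving only the nonnegative $a$-term; either way the sign conclusion is unaffected. The hypothesis $n \ge 3$ is used to keep these boundary cases and the breakpoint structure well defined, while the quantitative condition $n \ge 1 + 1/\delta^2$ enters solely through $\delta n \ge 2$, which is exactly what is needed to place $b$ to the right of, and $a-1$ to the left of, the mean $(n-1)p$ uniformly over each piece.
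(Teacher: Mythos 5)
Your proof is correct, but it takes a genuinely different route from the paper's. The paper also restricts attention to an open segment between consecutive breakpoints, but then proves convexity \emph{term by term}: it shows that every excluded binomial term $B(k;n,p)=\binom{n}{k}p^k(1-p)^{n-k}$ is itself convex on the segment, by locating the inflection points of $B(k;n,\cdot)$ (the roots of its second derivative lie within $1/(2\sqrt{n-1})$ of $k/n$) and using that excluded indices satisfy $\vert k/n-p\vert\ge\delta/2$; the hypothesis $n\ge 1+1/\delta^2$ enters there exactly as $\delta/2\ge 1/(2\sqrt{n-1})$, and $n\ge 3$ is forced on the paper because for $n=2$ the middle term $p(1-p)$ is concave everywhere, so the term-wise strategy cannot work without it. You instead sum the included terms first and exploit cancellation: the telescoping identity for the derivative of the binomial CDF collapses $f_n'$ to $n[\beta_b-\beta_{a-1}]$, a difference of just two $\mathrm{Binomial}(n-1,p)$ masses, after which convexity reduces to the sign conditions $b-(n-1)p\ge 0$ and $a-1-(n-1)p\le 0$, and your derivation that these follow from $\delta n\ge 2$ (which you correctly deduce from $n\ge 1+1/\delta^2$ via $(n-2)^2\ge 0$) is sound, as is your treatment of the degenerate indices $\beta_{-1}=\beta_n\equiv 0$ and of the jump discontinuities at breakpoints. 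This buys you something real: your argument needs only $\delta n\ge 2$, which for small $\delta$ is much weaker than $n\ge 1+1/\delta^2$ (linear rather than quadratic in $1/\delta$), and it never uses $n\ge 3$; so you in fact establish the theorem under weaker hypotheses, consistent with the paper's own remark after the theorem that its condition on $n$ ``may be improved or eliminated.'' The trade-off is that the paper's per-term argument is self-contained elementary calculus on single polynomials, whereas yours leans on the CDF-derivative identity, but the cancellation it provides is precisely what the term-wise approach gives up: individual terms with $k/n$ near $p$ are genuinely concave, so no term-by-term proof can reach your weaker constant.
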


In order to find the maximum error it therefore suffices to evaluate
the error function at the critical angles with boundary points removed
from $\mathcal{K}_{n,\delta}$. Note that the lower bound on $n$ is a
sufficient condition, and Figure~\ref{Fig:ErrorJumps}(a) indicates
that the condition on $n$ may be improved or eliminated.

\paragraph{Multi-stage evaluation.}
The measurement scheme described in \cite{KIT2002SVa} determines
$\varphi$ with accuracy $1/16$ and then quantizes to three bits, which
adds a maximum deviation of $1/16$, to obtain the desired $1/8$
accurate approximation. Instead of attempting to determine the angle
in a single pass, we can also apply the general idea behind Kitaev's
algorithm and use a multi-level approach. For a two-level approach we
start with a $1/4$-accurate estimate for $2\varphi$, using $1/8$
accurate angle estimation followed by two-bit quantization
$.\overline{\beta'_1\beta'_2}$. Based on this, we know that
$.\overline{0\beta'_1\beta'_2}$ or $.\overline{1\beta'_1\beta'_2}$ is
a $1/8$ accurate approximation for $\varphi$ and therefore only need
to determine the leading bit, which is conveniently done using the
phase-shift technique described in Section~\ref{Sec:PhaseShift}.  Even
though we quantize the estimation of $2\varphi$ to two bits, we can
decide how accurately we want to represent the unquantized estimation,
obtained based on the sine and cosine estimates, for use in the phase
shift. Using $k \geq 2$ bits gives a maximum deviation of
$1/8 + 1/2^{k+1}$, which, after halving, gives an angle
$\pi/8 + \pi/2^{k+1}$ for the determination of the sign of the
cosine. As shown in Table~\ref{Table:Sign}, the smaller the angle the
fewer measurements are needed to attain a desired confidence level.

For a three-stage approach we estimate the unquantized angle
$4\varphi$ with accuracy $1/4$, and then quantize to a single bit. We
then apply two stages of sign determination using a phase shift based
on the unquantized angle or a $k$-bit discretization to obtain the
final $1/8$ accurate three-bit quantized estimate for $\varphi$.

\begin{table}
\centering
\begin{tabular}{lrrrrrrrrrr}
\hline
{\bf{Angle}} $/$ $\epsilon$ & $10^{-1}$ & $10^{-2}$ & $10^{-3}$ & $10^{-4}$ & $10^{-5}$ & $10^{-6}$ & $10^{-7}$ & $10^{-8}$ & $10^{-9}$ & $10^{-10}$\\
\hline
$7\pi/16$ & 43 & 139 & 247 & 357 & 469 & 583 & 697 & 813 & 927 & 1043\\
$6\pi/16$ & 11 & 35 & 61 & 87 & 115 & 143 & 171 & 199 & 227 & 257\\
$5\pi/16$ & 5 & 15 & 27 & 37 & 49 & 61 & 73 & 85 & 97 & 111\\
$4\pi/16$ & 3 & 9 & 15 & 21 & 27 & 33 & 39 & 45 & 53 & 59\\
$3\pi/16$ & 1 & 5 & 9 & 13 & 15 & 19 & 23 & 27 & 31 & 35\\
$2\pi/16$ & 1 & 3 & 5 & 7 & 9 & 13 & 15 & 17 & 19 & 21\\
$\pi/16$ & 1 & 1 & 3 & 5 & 5 & 7 & 9 & 9 & 11 & 13\\
$\pi/32$ & 1 & 1 & 3 & 3 & 5 & 5 & 7 & 7 & 9 & 9\\
$\pi/64$ & 1 & 1 & 1 & 3 & 3 & 5 & 5 & 5 & 7 & 7\\
$\pi/128$ & 1 & 1 & 1 & 3 & 3 & 3 & 3 & 5 & 5 & 5\\
$\pi/256$ & 1 & 1 & 1 & 1 & 3 & 3 & 3 & 3 & 5 & 5\\
\hline
\end{tabular}
\caption{Number of measurement requires to correctly determine the
  sign of $\cos(\alpha)$ with probability at least $1-\epsilon$, when
  $\alpha$ deviates from $0$ or $\pi$ by at most the given angle.}\label{Table:Sign}
\end{table}

\begin{table}[t]
\centering
\begin{tabular}{lrrrrrrrrrr}
\hline
{\bf{Description}} $/$ $\epsilon$ & $10^{-1}$ & $10^{-2}$ & $10^{-3}$ & $10^{-4}$ & $10^{-5}$ & $10^{-6}$ & $10^{-7}$ & $10^{-8}$ & $10^{-9}$ & $10^{-10}$\\
\hline
Single-stage box                    &  112 &  222 &  334 &  452 &  570 &  688 &  806 &  932 & 1050 & 1176\\
Two-stage box (2-bits)              &   45 &   83 &  121 &  159 &  201 &  241 &  283 &  325 &  365 &  407\\
Two-stage box (3-bits)              &   43 &   79 &  115 &  149 &  189 &  225 &  265 &  305 &  343 &  383\\
Two-stage box (exact)               &   43 &   77 &  111 &  145 &  183 &  217 &  255 &  293 &  331 &  369\\
Three-stage box (2-bits)            &   52 &   96 &  142 &  190 &  240 &  286 &  336 &  386 &  434 &  484\\
Three-stage box (3-bits)            &   38 &   64 &   96 &  126 &  160 &  190 &  224 &  256 &  288 &  320\\
Three-stage box (exact)             &   32 &   54 &   78 &  104 &  130 &  154 &  180 &  208 &  234 &  260\\
Single-stage box, jointly           &   82 &  186 &  296 &  414 &  534 &  652 &  778 &  896 & 1014 & 1140\\
Single-stage wedge                  &   40 &   84 &  132 &  188 &  244 &  300 &  358 &  416 &  476 &  534\\
Two-stage wedge (2-bit)             &   19 &   37 &   57 &   77 &   99 &  119 &  141 &  163 &  185 &  207\\
Two-stage wedge (3-bit)             &   17 &   33 &   49 &   67 &   87 &  105 &  125 &  145 &  163 &  183\\
Two-stage wedge (exact)             &   15 &   29 &   47 &   63 &   81 &   97 &  115 &  133 &  149 &  169\\
Three-stage wedge (2-bit)           &   30 &   66 &  102 &  138 &  176 &  216 &  254 &  292 &  330 &  368\\
Three-stage wedge (3-bit)           &   18 &   38 &   58 &   78 &   98 &  122 &  142 &  164 &  184 &  206\\
Three-stage wedge (exact)           &   14 &   28 &   42 &   56 &   70 &   88 &  102 &  116 &  132 &  146\\
Sign based                          &   15 &   33 &   51 &   69 &   87 &  105 &  123 &  141 &  165 &  183\\
Sign based (bound)                  & {\it{  28}} & {\it{  48}} & {\it{  68}} & {\it{  88}} & {\it{ 108}} & {\it{ 128}} & {\it{ 148}} & {\it{ 168}} & {\it{ 188}} & {\it{ 208}}\\
Majority and sign                   &   17 &   29 &   41 &   55 &   67 &   79 &   93 &  105 &  119 &  133\\
Majority and sign (bound)           & {\it{  22}} & {\it{  35}} & {\it{  48}} & {\it{  62}} & {\it{  75}} & {\it{  88}} & {\it{ 102}} & {\it{ 115}} & {\it{ 128}} & {\it{ 141}}\\
\hline
\end{tabular}
\caption{Number of measurements required using different methods to
  obtain a $1/8$-accurate quantized estimate of $\varphi$ with probability at least
  $1-\epsilon$.}\label{Table:MeasurementsFirstStage}
\end{table}

\paragraph{Numerical evaluation.}
We numerically evaluate the different box-based schemes and summarize
the number of measurements for different error rates $\epsilon$ in
Table~\ref{Table:MeasurementsFirstStage}. For the single-stage
measurement scheme we choose $\bar{\epsilon} = \epsilon/2$ to estimate
the sine and cosine values. For the two-stage scheme we use
$\epsilon/4$ for the sine and cosine estimation and $\epsilon/2$ for
the second stage, while for the three-stage scheme we use $\epsilon/6$
for sine and cosine, and $\epsilon/3$ for the last two stages. For
each of the instances the condition on $n$ in
Theorem~\ref{Thm:BoxConvex} is satisfied and the numbers reported are
therefore optimal for the given setting. Some reduction in the number
of measurements in the box-based measurement schemes may however still
be obtained by partitioning $\epsilon$ differently over the different
stages. The best results are obtained with a three-stage approach, due
to the reduction in the number of samples required for the box-based
part, as well the limited number of samples needed to accurately
determine the sign of the cosine (see Table~\ref{Table:Sign}). As a
final remark, note that adding more measurements can temporarily
increase the error rate due to the discrete nature underlying the
error curve. As an example, it can be shown that approximation of the
cosine with $\delta=0.1$ succeeds with probability at least $0.7$ for
all $n \geq 110$ except $n\in\{113,\ldots,119\}$. Even though these
transition regions are not always present, they do show that care
needs to be taken when changing the number of samples.

\paragraph{Joint determination of sine and cosine error.} For the box
sampling scheme, we determine the number of samples required based on
the maximum error probability of the cosine component over all
angles. The same number of measurements is then used to independently
estimate the sine component. The sine error curve is the same as the
cosine error curve, but shifted by $\pi/2$. Based on
Figure~\ref{Fig:ErrorJumps}(a) we see that the error probabilities
tend to complement, and that joint determination of the error should
therefore help reduce the number of measurements. As shown in
Table~\ref{Table:MeasurementsFirstStage} (Single stage box, jointly),
this is indeed the case, but the reduction is somewhat modest, and in
fact, the reduced number of measurements can be no smaller than that
based on the cosine error probability evaluated with
$\bar{\epsilon} = \epsilon$ rather than $\epsilon/2$.

\subsection{Wedge-based angle}

Given $n$ measurements for both $p_x = (1+\cos(\alpha))/2$, and
$p_y = (1+\sin(\alpha))/2$, we can denote by $n_x$ and $n_y$ the
number of 1 measurements. The box-based approach requires that
$\abs{p_y - n_y/n} \leq \delta/2$ with high probability, and likewise
for $p_x$ and $n_x/n$. This requirement enables the use of the
Chernoff bound to derive a bound on the number of samples, but is
otherwise too restrictive. From Figure~\ref{Fig:Delta}(b) we see that
accurate determination of the angle only requires that $(n_x, n_y)$
lie within a wedge with angles $\alpha - \eta$ and $\alpha + \eta$
centered at $(n/2, n/2)$. The probability of success then consists of
the set of points $\mathcal{K}_{n,\eta}(\alpha)$ within the wedge,
which strictly includes the points accepted for the box-based
approach. The number of samples required to guarantee a $1-\epsilon$
probability of success will therefore be at least as good or smaller
than the box-based approach. (Theoretical results on the error
probability in the special case of $\eta = \pi/2$ can be found
in~\cite{KIM2015LYa}.)

\begin{figure}[!h]
\centering
\begin{tabular}{cc}
\includegraphics[width=0.3\textwidth]{./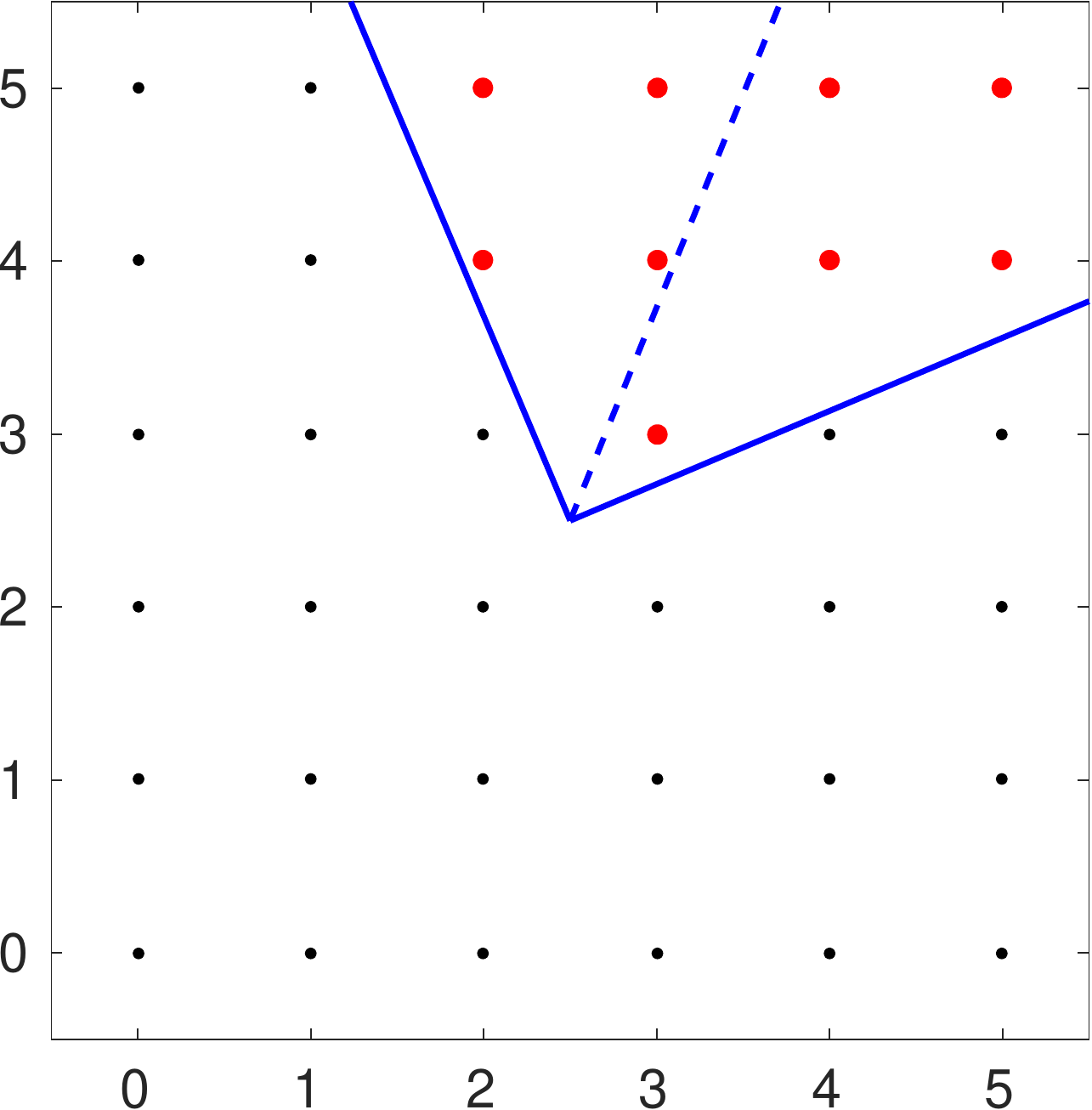}&
\includegraphics[width=0.3\textwidth]{./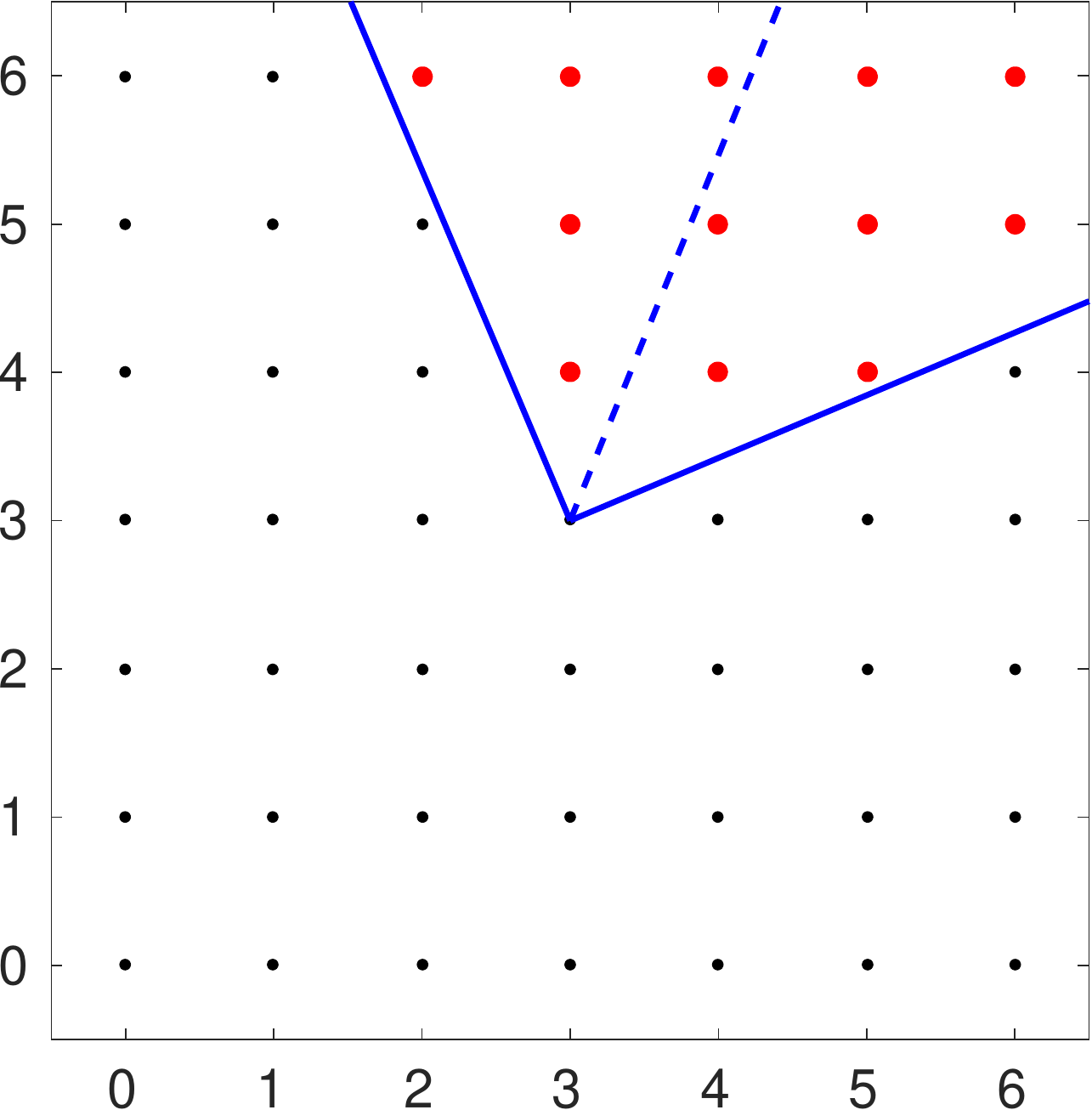}\\
({\bf{a}}) $n=5$ & ({\bf{b}}) $n=6$
\end{tabular}
\caption{Combinations of sine and cosine measurements that estimate
  the angle at the center of the wedge with maximum deviation $1/8$.}\label{Fig:WedgeGrid}
\end{figure}

\paragraph{Numerical evaluation.} Similar to the numerical evaluation
of the box approach, in order to find the error rate corresponding to
a given number of measurements $n$, we need to minimize
$\mathrm{Pr}(\mathcal{K}_{n,\eta}(\alpha))$ over $\alpha$. This can be
done by sweeping over the angles $\alpha$, determining
$\mathcal{K}_{n,\eta}(\alpha)$ at each point, as illustrated in
Figure~\ref{Fig:WedgeGrid}. An example of the resulting error
probability for $n=5$ and $\eta=\pi/4$ is shown in
Figure~\ref{Fig:ErrorJumps}(b). The break-points in the curve happen
at angles where grid points are on the boundary of the wedge (for even
$n$, the origin of the wedge $(n/2, n/2)$ is excluded and is therefore
not considered to lie on the boundary). The approach therefore is to
find all angles $\alpha$ at which the wedge boundary intersects grid
points, and evaluate the error probability at those angles, with
boundary points at either the bottom or top edges omitted to obtain
the limit as $\alpha$ approaches the critical angle from a clockwise
or counter-clockwise direction. The error probability in
Figure~\ref{Fig:ErrorJumps}(b) appears piecewise convex, but we did
not attempt to rigorously establish this. The number of measurements
determined using the above algorithm should therefore be interpreted
as a lower bound for the method. The resulting number of measurements
for the single-stage wedge-based approach, as well as the extension to
the two- and three-stage approach described in
Section~\ref{Sec:BoxBased}, are listed in
Table~\ref{Table:MeasurementsFirstStage}.

\subsection{Triple-sign sampling}

When angle $\alpha$ is at most $\pi/4$ from either $0$ or $\pi$, as
illustrated in Figure~\ref{Fig:TripleSign}(a), the sign-based approach
can be used with $p_x$ to determine with probability at least
$1-\bar{\epsilon}$ whether the point lies on the right or left of the
vertical axis. Similarly, when $\alpha$ is $\pi/4$ close to $\pi/2$ or
$3\pi/2$, as shown in Figure~\ref{Fig:TripleSign}(b), we can tell with
the same probability whether the point lies above or below the
horizontal axis. When $\alpha$ lies outside of the given range we make
no assumption on the results in either case. When applying both
schemes we can combine the obtained signs, as shown in
Figure~\ref{Fig:TripleSign}(c).  By construction we know that at least
one of the two is correct, up to the desired success rate of
$1-\bar{\epsilon}$.  For angles between $\pi/4$ and $3\pi/4$, as
illustrated in the plot, this means that when the method succeeds we
obtain either $01$ or $11$. The maximum error obtained in this case is
therefore $\pi/2$. A more convenient quantization can be obtained by
applying a phase shift of $\pi/4$ prior to applying the two
measurement steps, followed by the inverse phase shift to the
result. After changing the labels we obtain the quantization values
given in Figure~\ref{Fig:TripleSign}(d). To obtain a $1/8$ accurate
estimation we can apply an additional Kitaev step with sign-based
sampling.

\begin{figure}
\centering\setlength{\tabcolsep}{3pt}
\begin{tabular}{cccc}
\includegraphics[width=0.23\textwidth]{./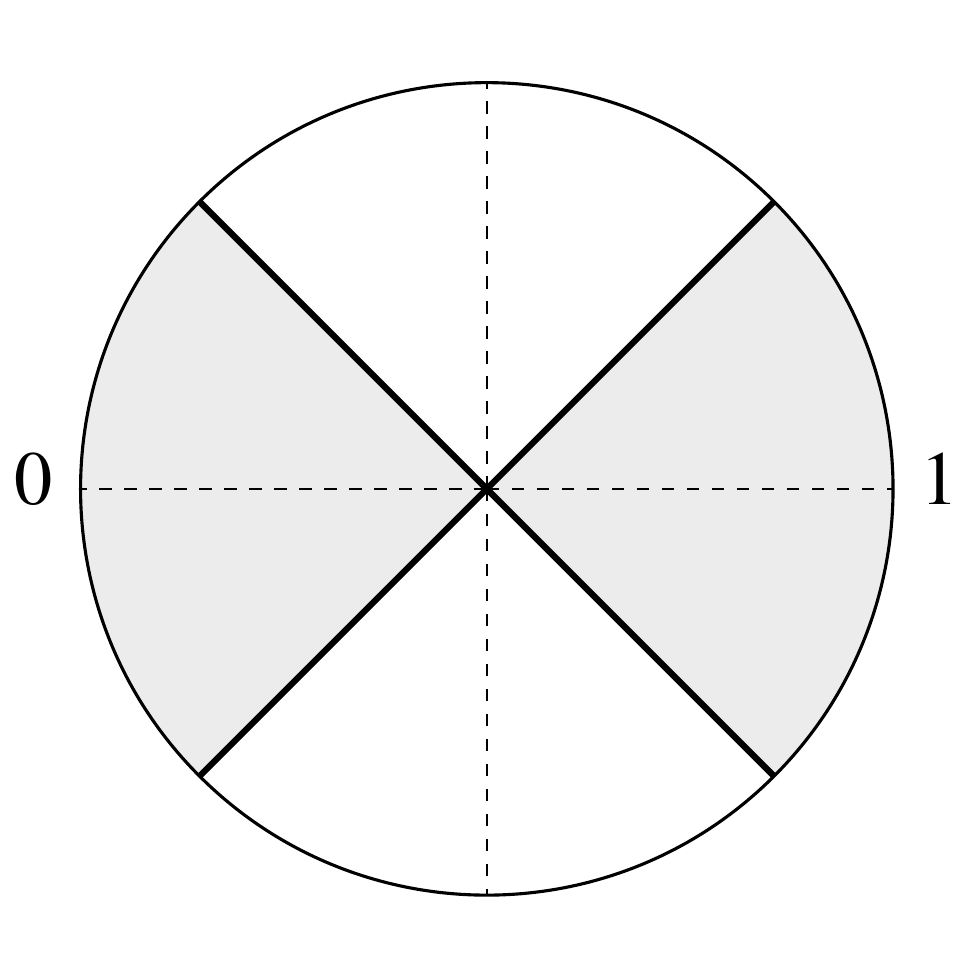}&
\includegraphics[width=0.23\textwidth]{./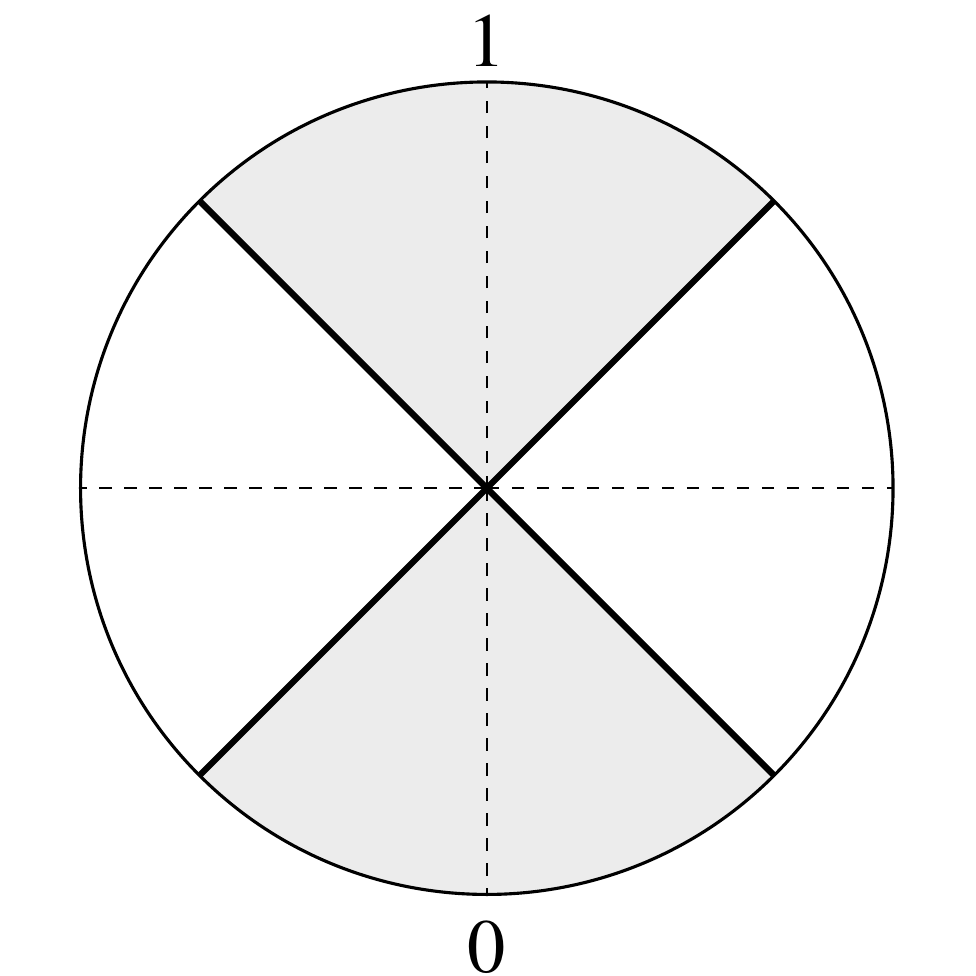}&
\includegraphics[width=0.23\textwidth]{./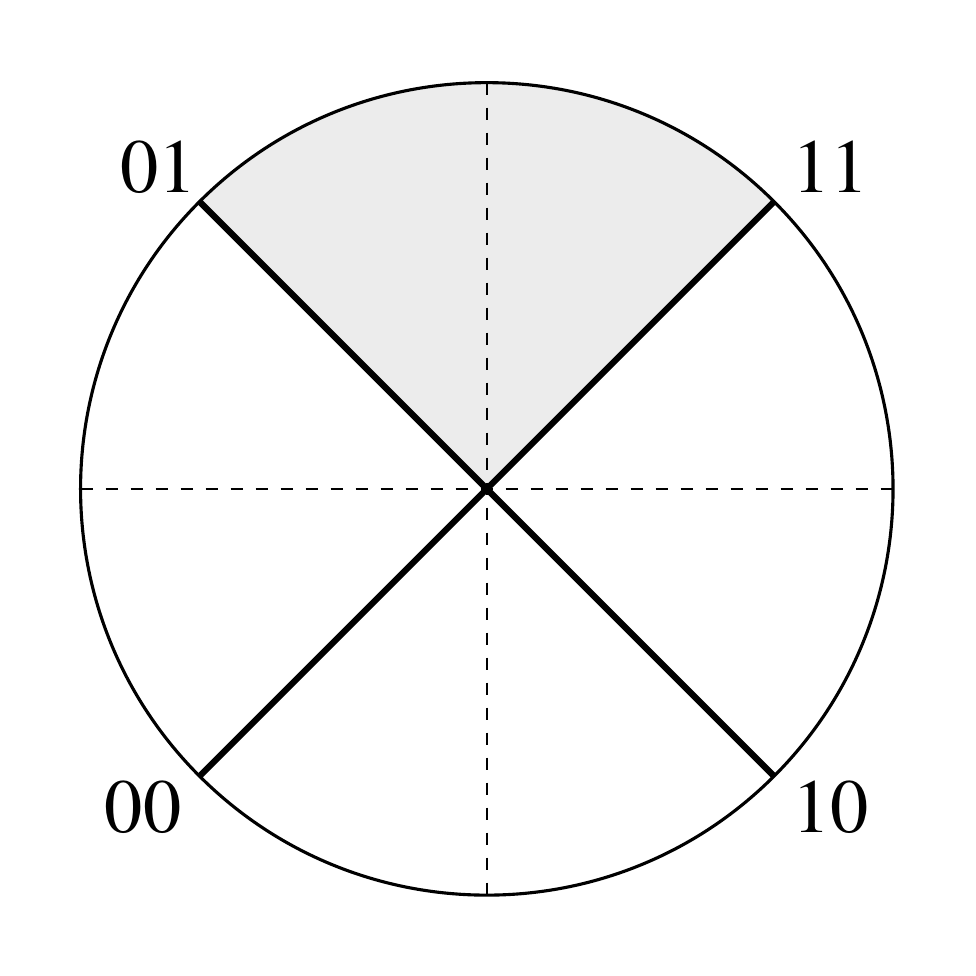}&
\includegraphics[width=0.23\textwidth]{./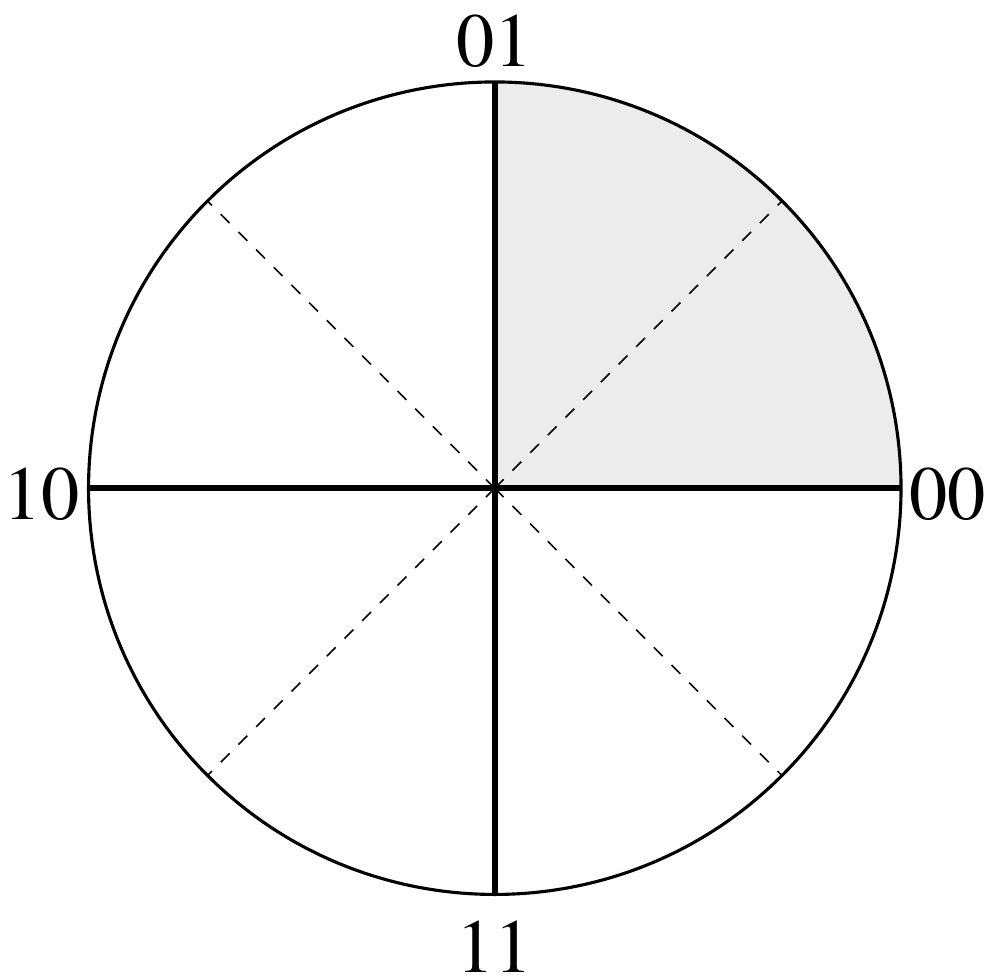}\\
({\bf{a}}) & ({\bf{b}}) & ({\bf{c}}) & ({\bf{c}})
\end{tabular}
\caption{Estimation of (a) the horizontal and (b) the vertical
  component of the angle, accurate for angles in the shaded regions;
(c) combination of the two sign-based estimates to obtain a $1/4$
accurate quantization; and (d) alternative quantization obtained by applying a $\pi/4$
rotation before the measurements and an inverse rotation and
relabeling afterwards.}\label{Fig:TripleSign}
\end{figure}

For the sufficient number of measurements per component,
Theorem~\ref{Thm:MeasurementsSign} applied with $\alpha = \pi/4$ gives
\begin{equation}\label{Eq:SignSamplesPi4}
n \geq \frac{\log(1/\epsilon)}{\log(1/\sin(\pi/4))} =
\frac{\log(1/\epsilon)}{\log(2/\sqrt{2})}
= \frac{\log(1/\epsilon)}{\log(2) - \log(2)/2}
=  2\log_2(1/\epsilon).
\end{equation}
The first stage requires $n$ measurements each for the horizontal and
vertical component. The second stage requires another $n$
measurements, for a total of $3n$ measurements. For each of the three
steps we can choose $\bar{\epsilon} = \epsilon /2$ for a total maximum
error of $\epsilon$. Note that the maximum error in the first stage is
$\bar{\epsilon}$, since one of the two components is irrelevant
(although we do not know which of the two it is). Combined we can take
\begin{equation}\label{Eq:NTripleSign}
N = 3\lceil 2\log_2(2/\epsilon)\rceil \leq 9 + 6\log_2(1/\epsilon),
\end{equation}
where the inequality is due to the addition of 3 to account for
rounding to integers.

\subsection{Majority sampling}

\begin{figure}[t]
\centering
\begin{tabular}{cccc}
$.\overline{01}$ && &\\
$.\overline{10}$
  $\vcenter{\hbox{\includegraphics[width=0.225\textwidth]{./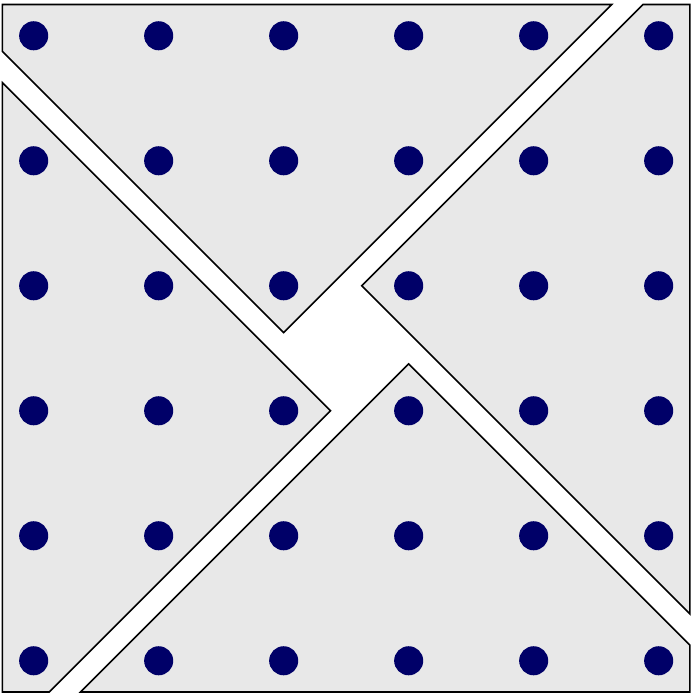}}}$ $.\overline{00}$
   &
$\vcenter{\hbox{\includegraphics[width=0.225\textwidth]{./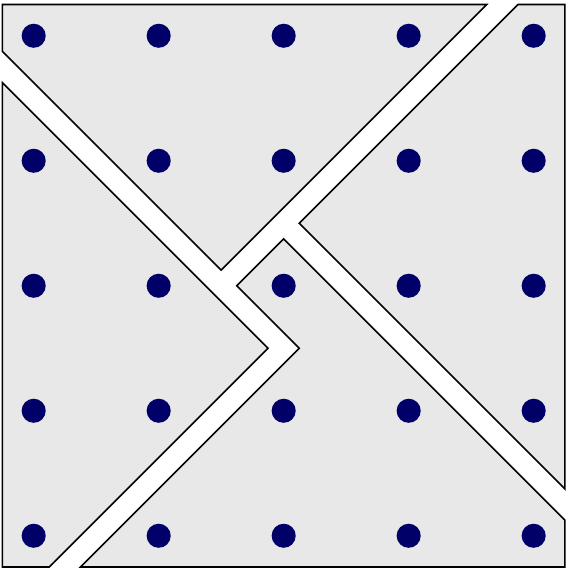}}}$
&
$\vcenter{\hbox{\rotatebox{90}{\begin{minipage}{105pt}$j\!=\!0$\hfill$\ldots$\hfill
        $n$\end{minipage}}}}$\hspace*{-12pt}&%
$\vcenter{\hbox{\includegraphics[width=0.225\textwidth]{./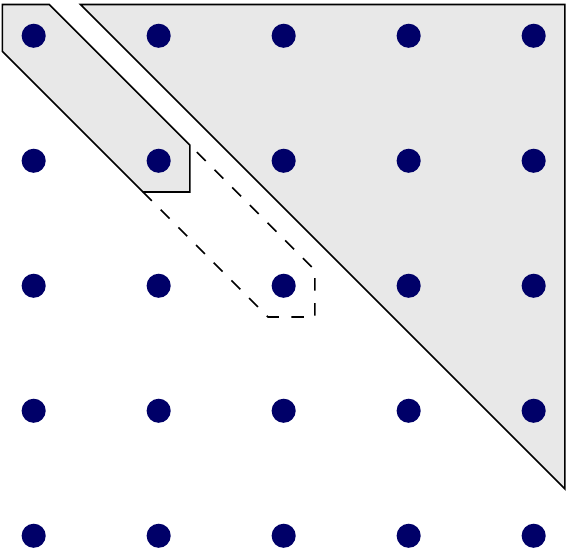}}}$
  \\
\\[-10pt]
$.\overline{11}$ & & &$i=0$\hfill$\ldots$\hfill$n$ \\
({\bf{a}}) & ({\bf{b}}) && ({\bf{c}})
\end{tabular}
\caption{Classification regions for majority-based sampling for (a) odd, and
  (b) even $n$; and (c) the success set $\mathcal{K}_n$ for angles in the range $[0,\pi/2)$.}\label{Fig:Majority}
\end{figure}

For majority sampling we take $n$ measurements for the sine and cosine
components, and count the number of positive measurements by $n_y$ and
$n_x$, respectively. The quantized approximation of the angle is
defined in terms of the majority of the number of 0 or 1
measurements
\[
q(n_x,n_y) = \begin{cases} .\overline{00} & n_x \geq \max\{n_y, n - n_y+1\} \\
.\overline{01} & n_y \geq \max\{n_x + 1, n - n_x\}\\
.\overline{10} & n-n_x \geq \max\{n_y+1, n-n_y\}\\
.\overline{11} & \mathrm{otherwise},
\end{cases}
\]
which gives partitions as illustrated in Figures~\ref{Fig:Majority}(a)
and (b).  We want to obtain an estimator that is 1/4 accurate with
probability at least $1-\epsilon$. In particular we allow angles
$\phi \in [0,1/4)$ to be quantized as either $.\overline{00}$ or
$.\overline{01}$, and similarly for interval increments of
$1/4$. Denoting by $\mathcal{K}_{ab}$ the set of points that map to
$.\overline{ab}$, this gives a success set of
$\mathcal{K}_{00} \cup \mathcal{K}_{01}$. For the analysis of the
error we work with a reduced set
$\mathcal{K}_n = \{(i,j) \mid i,j \in [0,n], j \geq n-i+1\}$,
illustrated by the top-right triangle in
Figure~\ref{Fig:Majority}(c). Based on this we have the following
result (proven in Appendix~\ref{Sec:ProofThmMajorityBound}):
\begin{theorem}\label{Thm:MajorityBound}
Let $\mathcal{K}_n = \{(i,j) \mid i,j \in [0,n], j \geq n-i+1\}$, then
for all $\alpha \in [0,\pi/2]$
\[
1 - \mathrm{Pr}(\mathcal{K}_n \mid n,\alpha) \leq \frac{2}{2^n}.
\]
\end{theorem}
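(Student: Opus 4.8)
The plan is to pass to the complementary event, recognize it as an upper-tail probability for a sum of two independent binomials, and control it with a Chernoff bound whose optimized exponent collapses to exactly $1/2$ once the identity $\cos^2\alpha+\sin^2\alpha=1$ is invoked.

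First I would reformulate the event. With $p_x=(1+\cos\alpha)/2$ and $p_y=(1+\sin\alpha)/2$, the counts of successful cosine and sine measurements are independent variables $n_x\sim\mathrm{Bin}(n,p_x)$ and $n_y\sim\mathrm{Bin}(n,p_y)$, and \eqref{Eq:Binomial2} gives $\mathrm{Pr}(\mathcal{K}_n\mid n,\alpha)=\mathrm{Pr}(n_x+n_y\geq n+1)$, since $j\geq n-i+1$ is exactly $i+j\geq n+1$. Hence $1-\mathrm{Pr}(\mathcal{K}_n\mid n,\alpha)=\mathrm{Pr}(n_x+n_y\leq n)$. Passing to the failure counts $n_x'=n-n_x\sim\mathrm{Bin}(n,q_x)$ and $n_y'=n-n_y\sim\mathrm{Bin}(n,q_y)$, with $q_x=(1-\cos\alpha)/2$ and $q_y=(1-\sin\alpha)/2$, the event becomes $F:=n_x'+n_y'\geq n$. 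Because $\alpha\in[0,\pi/2]$ forces $\cos\alpha,\sin\alpha\geq 0$, both $q_x,q_y\in[0,1/2]$, so I am bounding an upper tail of $F$ at the threshold $n$.

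Next I would apply the Chernoff bound: for any $t\geq 1$, using independence to factor the generating function,
\[
\mathrm{Pr}(F\geq n)\leq\frac{\mathbb{E}[t^{n_x'}]\,\mathbb{E}[t^{n_y'}]}{t^n}=g(t)^n,\qquad g(t):=\frac{(1-q_x+q_x t)(1-q_y+q_y t)}{t}.
\]
Expanding shows $g(t)=(1-q_x)(1-q_y)/t+[(1-q_x)q_y+q_x(1-q_y)]+q_xq_y\,t$, which is convex on $t>0$ with minimizer $t^\ast=\sqrt{(1-q_x)(1-q_y)/(q_xq_y)}\geq 1$ (the inequality holds since $q_x,q_y\leq 1/2$), so the restriction $t\geq 1$ is inactive and
\[
\min_{t\geq 1}g(t)=(1-q_x)q_y+q_x(1-q_y)+2\sqrt{q_x(1-q_x)q_y(1-q_y)}.
\]

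The crucial step is to substitute $q_x=(1-\cos\alpha)/2$ and $q_y=(1-\sin\alpha)/2$. A short computation gives $(1-q_x)q_y+q_x(1-q_y)=(1-\cos\alpha\sin\alpha)/2$ and $q_x(1-q_x)q_y(1-q_y)=(1-\cos^2\alpha)(1-\sin^2\alpha)/16$; here I would use $\cos^2\alpha+\sin^2\alpha=1$ to replace $1-\cos^2\alpha=\sin^2\alpha$ and $1-\sin^2\alpha=\cos^2\alpha$, so that the square root equals $\cos\alpha\sin\alpha/4$ (using $\cos\alpha,\sin\alpha\geq 0$). The two contributions then add to $\min_{t\geq1}g(t)=(1-\cos\alpha\sin\alpha)/2+\cos\alpha\sin\alpha/2=1/2$ for every $\alpha$. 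Therefore $1-\mathrm{Pr}(\mathcal{K}_n\mid n,\alpha)=\mathrm{Pr}(F\geq n)\leq 2^{-n}\leq 2/2^n$, which is in fact a factor of two stronger than the stated bound; the degenerate endpoints $q_x=0$ or $q_y=0$ need no separate treatment, since there $\inf_{t\geq1}g(t)=1/2$ as well and $\mathrm{Pr}(F\geq n)\leq(\inf_{t\geq1}g(t))^n$.

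The main obstacle is precisely this last cancellation. The naive bound that lets $q_x$ and $q_y$ each be as large as $1/2$ independently is vacuous --- it would give $\mathrm{Pr}(F\geq n)\approx 1/2$ --- so the argument must exploit the geometric constraint $\cos^2\alpha+\sin^2\alpha=1$ that ties the two failure probabilities together. Recognizing that this constraint makes the optimized Chernoff exponent identically $1/2$, independent of $\alpha$, is the key insight; expanding $g$, locating its minimizer, and checking the cancellation are then routine.
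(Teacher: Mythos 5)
Your proof is correct, and it takes a genuinely different route from the paper's --- one that is in fact stronger. The paper works with the complement $\mathcal{E}_n = \{(i,j) : i+j \leq n\}$ and controls it diagonal by diagonal: the main diagonal sum $d_n$ is bounded by $2^{-n}$ via the binomial theorem combined with the identity $p_x(1-p_x)p_y(1-p_y) = \sin^2(\alpha)\cos^2(\alpha)/16$, then $d_{n-1}$ is bounded in terms of $d_n$ by tracking ratios of adjacent binomial terms, and the remaining diagonals are summed as a geometric series with ratio $(1-p_x)(1-p_y)/(p_xp_y)$; the final prefactor $(p_x+p_y-p_xp_y)/(p_x+p_y-1) \leq 2$ is what produces the constant $2$ in the statement. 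You replace all of this combinatorial bookkeeping with a single Chernoff/MGF bound on the sum of the two independent failure counts, and the same Pythagorean cancellation that the paper uses for $d_n$ alone makes your optimized exponent identically $\frac{1}{2}$, uniformly in $\alpha$. Your steps check out: the reformulation $\mathcal{K}_n \Leftrightarrow n_x+n_y \geq n+1$ is right, the restriction $t \geq 1$ is correctly shown to be inactive because $q_x, q_y \leq 1/2$ forces $t^* \geq 1$, and the degenerate endpoints are handled properly since the bound holds for every admissible $t$. The payoff is notable: you obtain $f_n(\alpha) \leq 2^{-n}$ rather than $2\cdot 2^{-n}$, and since $f_n(0) = 2^{-n}$ exactly, your bound shows that the maximum of $f_n$ over $[0,\pi/2]$ is attained at $\alpha = 0$. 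That is precisely the open question posed after the theorem and again in the Discussion (where it is supported only by numerics and the case $n \le 2$), and settling it confirms the claimed $2\log_2(1/\epsilon)$ sampling complexity for the majority-based approach, shaving two samples off the combined bound. In short: same key trigonometric identity, but deployed inside a standard large-deviations argument instead of a diagonal decomposition, yielding a shorter proof and a sharper constant.
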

\noindent We expect that this result can be improved by a factor of two. Indeed,
defining the error probability
\begin{equation}\label{Eq:fn}
f_n(\alpha) = 1 - \mathrm{Pr}(\mathcal{K}_n \mid n, \alpha)
\end{equation}
over angles $\alpha = 2\pi\varphi$, it can be seen from
Figure~\ref{Fig:MajorityError} that the error curves are convex and
attain the maximum at $\alpha = 0$. The error probability is the
summation of the probabilities for $(i,j) \not\in\mathcal{K}_n$. At
$\alpha = 0$, we have $p_x = 1$, which implies that all terms
including $(1-p_x)^{n-i}$ are zero, except those with $i=n$. The only
such point is $(n,0)$, and we therefore have
\[
f_n(0) = (1-p_y)^n = (1-\sfrac{1}{2})^n = 2^{-n}.
\]
We can now expand $\mathcal{K}_n$ with any of the points in the gray
part of the diagonal in Figure~\ref{Fig:Majority}(c). This lowers the
error for $\alpha > 0$ but does not affect $f_n(0)$. Under the
assumption that the maximum of $f_n(\alpha)$ is attained at
$\alpha = 0$, the maximum error for the set
$\mathcal{K}_{00} \cup \mathcal{K}_{01}$ is therefore $2^{-n}$. By
rotational symmetry the same applies for the remaining quadrants,
including the special case of $\mathcal{K}_{11}$ for even
$n$. Extending $\mathcal{K}_n$ only decreases the error and the result
in Theorem~\ref{Thm:MajorityBound} continues to hold. The error
probability for the majority-based approach over all angles is
therefore bounded by $2/2^n$, which can likely be improved to
$1/2^{n}$. The approach requires $n$ samples in both the horizontal
and vertical direction therefore amounting to a total of $N=2n$
samples. In order to achieve an accuracy of $1/8$, we combine the
majority-based approach with a single stage of sign determination. The
resulting number of samples for different values of $\epsilon$ is
listed in Table~\ref{Table:MeasurementsFirstStage}.  A theoretical
bound on the number of samples can be found
using~\eqref{Eq:SignSamplesPi4}, giving
\[
N = 2\lceil \log_2(4/\epsilon)\rceil + \lceil 2\log_2(2/\epsilon)\rceil
\leq 9 + 4\log_2(1/\epsilon).
\]
This bound can be lowered by two samples if it can be shown that
$\alpha=0$ maximizes $f_n(\alpha)$ over $[0,\pi/2]$.

\begin{figure}
\centering
\includegraphics[width=0.55\textwidth]{./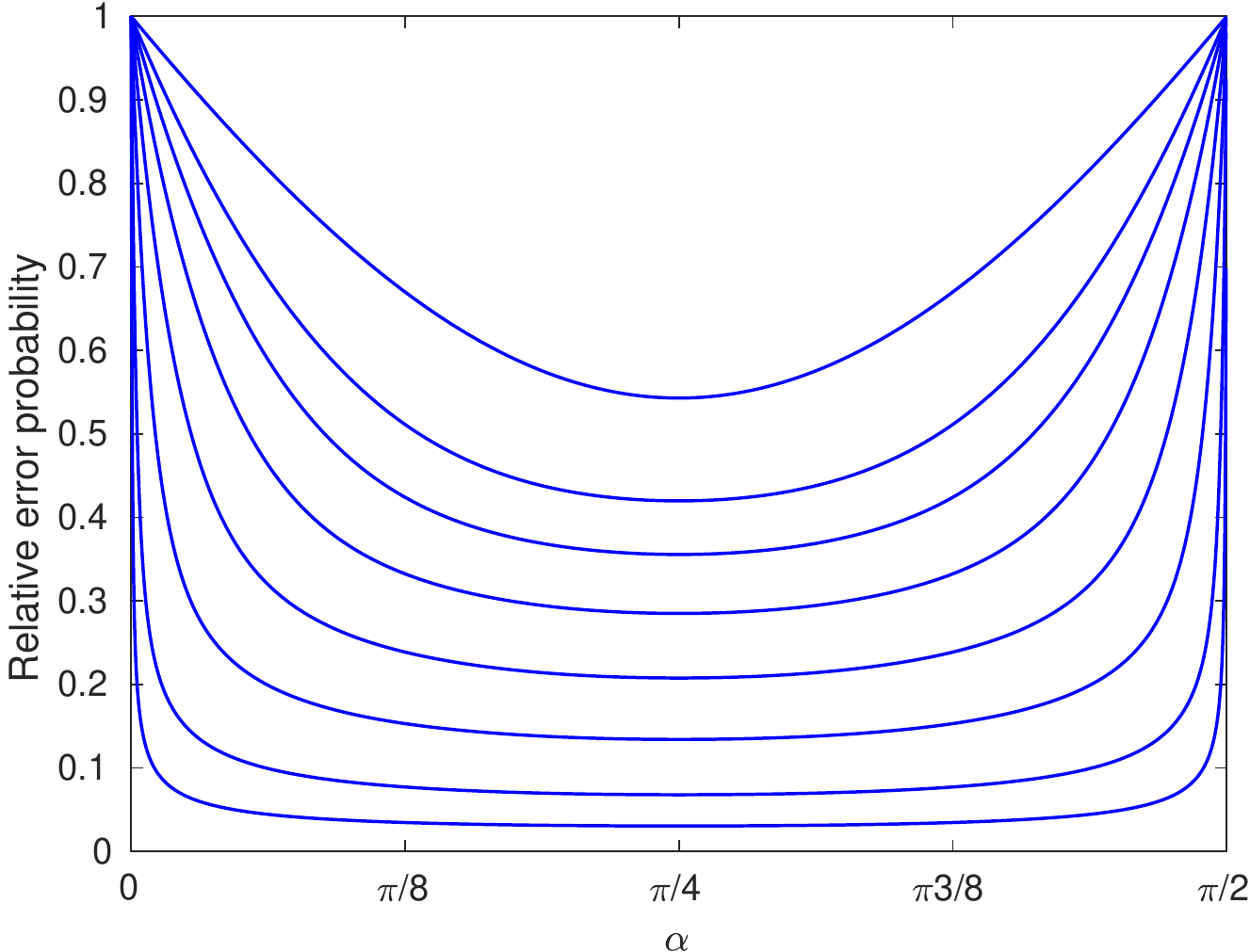}
\caption{Error probability $f_n(\alpha)$ scaled by $2^n$, with from top to bottom $n =
  1, 2, 3, 5, 10, 25, 100, 500$.}\label{Fig:MajorityError}
\end{figure}

\section{Evaluation of $N_{\epsilon}$}\label{Sec:Nepsilon}

For a given $\epsilon$ we can first determine $k_{\epsilon}$
using~\eqref{Eq:BoundK} and set $\bar{\epsilon} = \epsilon / k_{\epsilon}$.
Denote by $N_k$ the number of samples in steps
$k=1,\ldots,k_{\epsilon}-1$. For the first step we can either use the
triple-sign ($s$) or majority ($m$) based approaches giving respectively
\begin{equation}\label{Eq:N1}
N_1^s = 9 + 6\log_2(1/\bar{\epsilon}),\quad\mathrm{or}\quad
N_1^m = 9 + 4\log_2(1/\bar{\epsilon}).
\end{equation}
For the remaining steps we use the sign-based approach with angles
$\alpha = \pi/2^{k+1}$. Using Theorem~\ref{Thm:MeasurementsSign}, and
ignoring rounding up to the nearest integer we can take
\[
N_k = \frac{\log(1/\bar{\epsilon})}{\log(1/\sin(\pi/2^{k+1}))}
\leq \frac{\log(1 / \bar{\epsilon})}{\log(2^{k-1})} =
\frac{\log_2(1/\bar{\epsilon})}{k-1},
\]
where the inequality follows from $\sin(\pi/2^{k+1}) \leq  \pi /
2^{k+1}  \leq  4 / 2^{k+1} =  1 / 2^{k-1}$.
Summing over $N_k$ gives
\begin{eqnarray}
\sum_{k=2}^{k_{\epsilon}-1}N_k
&=&\log_2(1/\bar{\epsilon})\sum_{k=1}^{k_{\epsilon}-2}\frac{1}{k}\notag\\
&\leq&\log_2(1/\bar{\epsilon})\left(1 + \int_{1}^{k_{\epsilon}-2}
       x^{-1}dx\right) \notag\\
& = &\log_2(1/\bar{\epsilon})\left(1 + \log(k_{\epsilon}-2)\right)\label{Eq:Nk}
\end{eqnarray}
To account for rounding up of the intermediate values we add one for
each of the remaining $k_{\epsilon}-2$ steps. Combining \eqref{Eq:N1},
\eqref{Eq:Nk}, and the rounding term, and using $\log_2(1/\bar{\epsilon})
= \log_2(1/\epsilon) + \log_2(k_{\epsilon})$ gives
\begin{equation}\label{Eq:BoundNs}
N_{\epsilon}^s \leq 7 + k_{\epsilon} + (7 +
\log(k_{\epsilon}-2))\cdot(\log_2(1/\epsilon) + \log_2(k_{\epsilon}))
\end{equation}
for triple-sign based sampling and
\begin{equation}\label{Eq:BoundNm}
N_{\epsilon}^m \leq 7 + k_{\epsilon} + (5 +
\log(k_{\epsilon}-2))\cdot(\log_2(1/\epsilon) + \log_2(k_{\epsilon}))
\end{equation}
for majority-based sampling.

\begin{figure}
\centering
\begin{tabular}{cc}
\includegraphics[width=0.475\textwidth]{./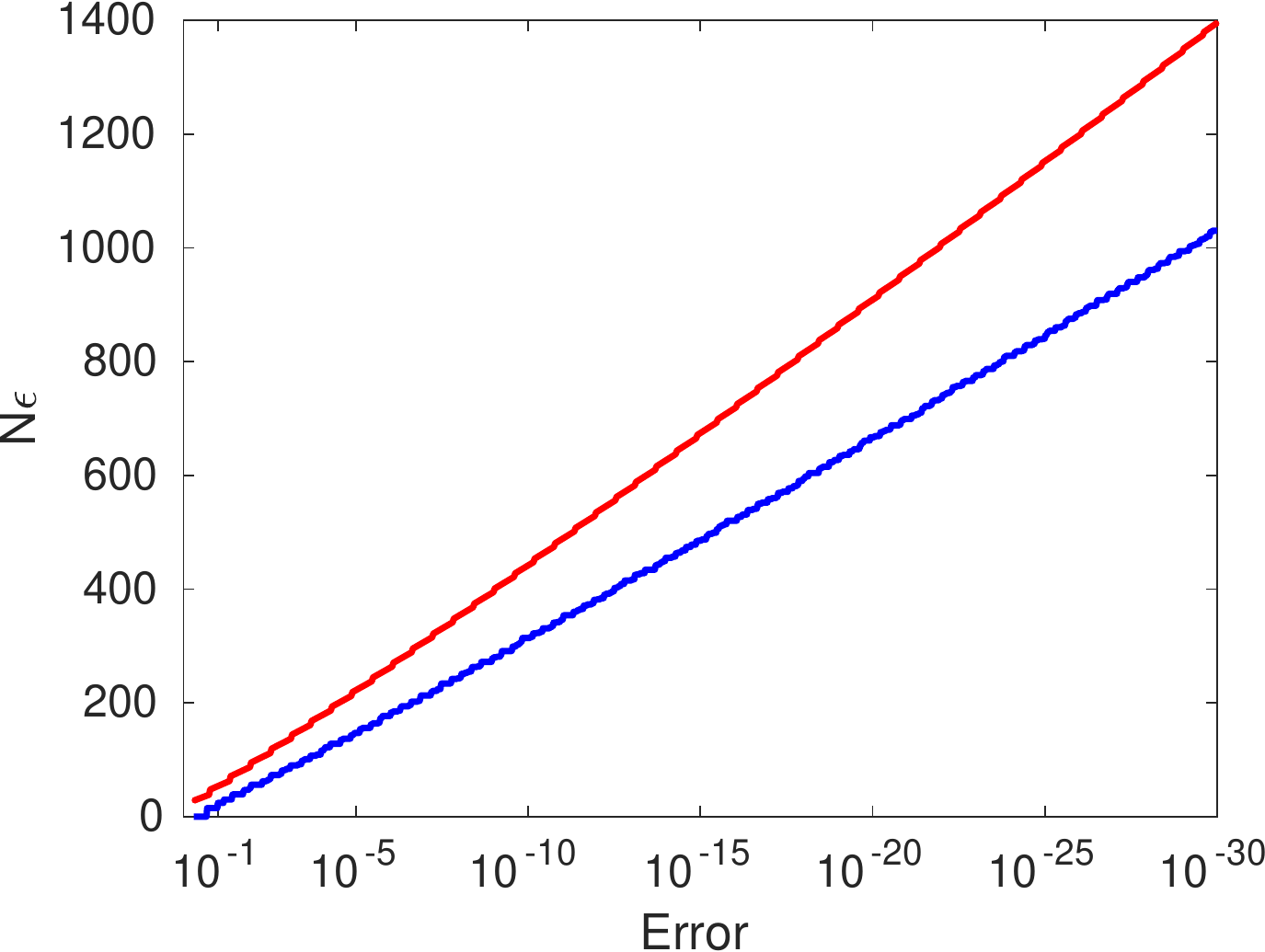}&
\includegraphics[width=0.475\textwidth]{./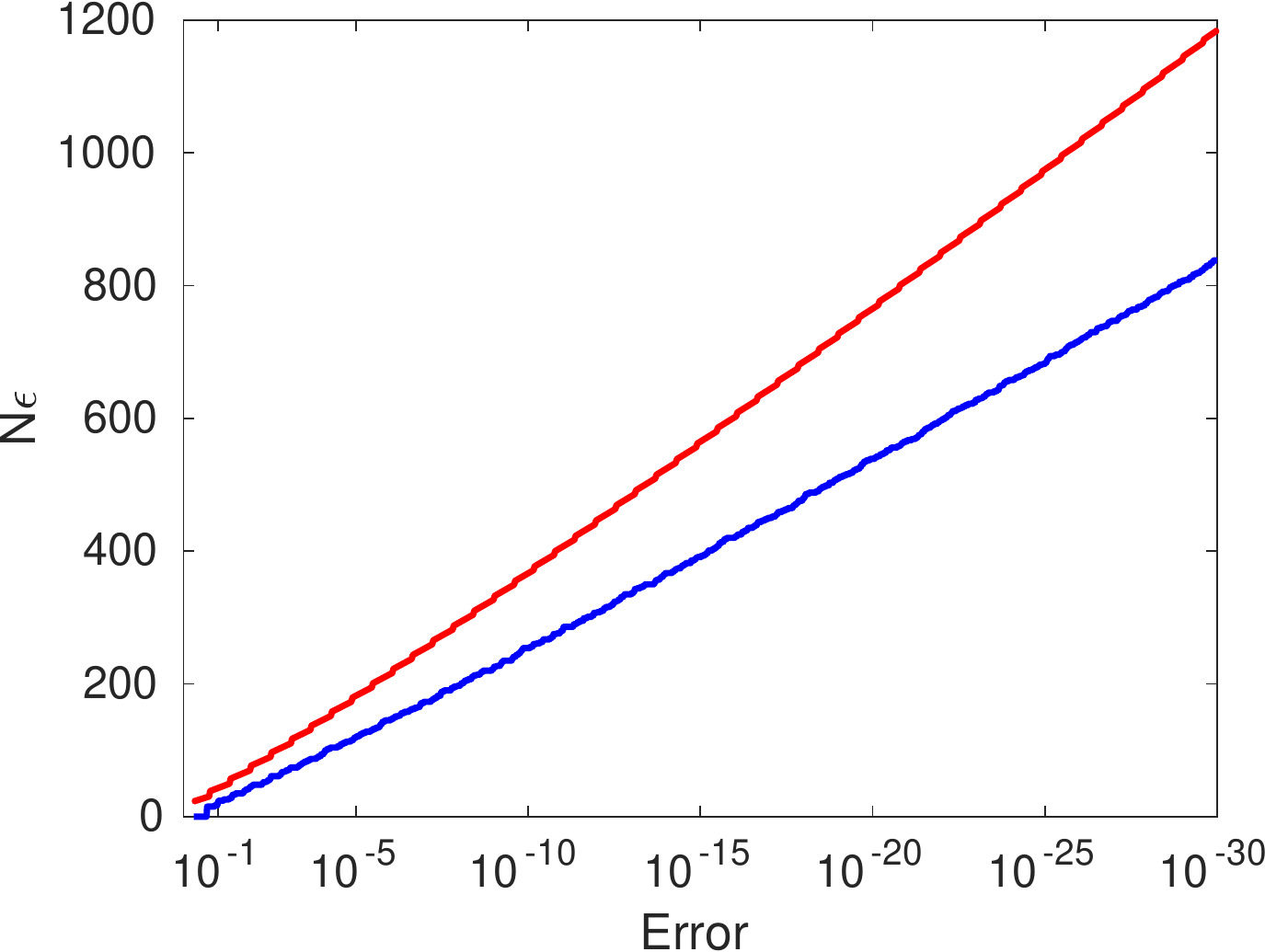}\\
({\bf{a}}) & ({\bf{b}})
\end{tabular}
\caption{Numerical evaluation of $N_{\epsilon}$ along with the
  theoretical upper bound, using (a) triple-sign
  sampling, and (b) majority-based sampling.}\label{Fig:PlotNepsilon}
\end{figure}

\paragraph{Numerical evaluation.}
For a numerical evaluation of $N_{\epsilon}$ we first determine the
critical iteration $k_{\epsilon}$ by finding the smallest integer $k$
that satisfies~\eqref{Eq:SufficientK}. Based on $k_{\epsilon}$ we set
$\bar{\epsilon} = \epsilon/k_{\epsilon}$ and use both the triple-sign
and majority-based sampling methods for the first iteration. After
that we use sign-based sampling with increasingly accurate phase
shifts to obtain the total number of evaluations before reaching
iteration $k_{\epsilon}$. The resulting values for $N_{\epsilon}$ are
plotted in Figure~\ref{Fig:PlotNepsilon} along with the theoretical
bounds given in equations~\eqref{Eq:BoundNs} and~\eqref{Eq:BoundNm}.
A summary of $k_{\epsilon}$ for different values of $\epsilon$ as well
as $N_{\epsilon}$ values for the two different sampling methods used
in the first iteration is given in
Table~\ref{Table:SummaryNk}. Finally, Table~\ref{Table:Mepsilon} gives
the total number of iterations needed to obtain a $2^{-(m+2)}$
accurate phase estimate with probability at least $1-\epsilon$ up to
and including iteration $k_{\epsilon}$. For each combination of
$\epsilon$ and $m$ that contains a number, we choose
$\bar{\epsilon} = \epsilon/m$. The dashed fields are in the regime
where a single measurement can be taken per additional bit of the
estimated angle, without having to change $\bar{\epsilon}$.

\begin{table}[!h]
\centering\setlength{\tabcolsep}{5.5pt}
\begin{tabular}{lrrrrrrrrrr}
\hline
Description $\backslash$ $\epsilon$ & $10^{-1}$ & $10^{-2}$ & $10^{-3}$ & $10^{-4}$ & $10^{-5}$ & $10^{-6}$ & $10^{-7}$ & $10^{-8}$ & $10^{-9}$ & $10^{-10}$\\
\hline
$k_{\epsilon}$ using \eqref{Eq:SufficientK} &    3 &    5 &    7 &    9 &   10 &   12 &   14 &   16 &   17 &   19\\
$k_{\epsilon}$ using \eqref{Eq:BoundK} &    4 &    6 &    7 &    9 &   11 &   12 &   14 &   16 &   17 &   19\\
$N_{\epsilon}^s$ (triple-sign) &   24 &   56 &   84 &  116 &  147 &  177 &  213 &  243 &  280 &  314\\
$N_{\epsilon}^s$ (triple-sign, bound~\eqref{Eq:BoundNs}) & {\it{  44}} & {\it{  84}} & {\it{ 123}} & {\it{ 163}} & {\it{ 197}} & {\it{ 237}} & {\it{ 277}} & {\it{ 317}} & {\it{ 353}} & {\it{ 394}}\\
$N_{\epsilon}^m$ (majority) &   24 &   48 &   72 &   96 &  121 &  147 &  175 &  199 &  226 &  256\\
$N_{\epsilon}^m$ (majority, bound~\eqref{Eq:BoundNm}) & {\it{  34}} & {\it{  66}} & {\it{  98}} & {\it{ 130}} & {\it{ 158}} & {\it{ 190}} & {\it{ 223}} & {\it{ 256}} & {\it{ 285}} & {\it{ 319}}\\
\hline
\end{tabular}
\caption{Summary of $k_{\epsilon}$ for different values of $\epsilon$
  as well as $N_{\epsilon}$ values for different sampling methods.}\label{Table:SummaryNk}
\bigskip\medskip
\centering\setlength{\tabcolsep}{3.5pt}
\begin{tabular}{lrrrrrrrrrrrrrrrrrrr}
\multicolumn{20}{l}{{Triple-sign sampling}}\\
\hline
$\epsilon$ $\backslash$ $m$ & $1$ & $2$ & $3$ & $4$ & $5$ & $6$ & $7$ & $8$ & $9$ & $10$ & $11$ & $12$ & $13$ & $14$ & $15$ & $16$ & $17$ & $18$ & $19$\\
\hline
$10^{-1}$ &   15 &   16 &   19 &   -- &   -- &   -- &   -- &   -- &   -- &   -- &   -- &   -- &   -- &   -- &   -- &   -- &   -- &   -- &   --\\
$10^{-2}$ &   33 &   36 &   41 &   42 &   45 &   -- &   -- &   -- &   -- &   -- &   -- &   -- &   -- &   -- &   -- &   -- &   -- &   -- &   --\\
$10^{-3}$ &   51 &   58 &   61 &   66 &   69 &   70 &   73 &   -- &   -- &   -- &   -- &   -- &   -- &   -- &   -- &   -- &   -- &   -- &   --\\
$10^{-4}$ &   69 &   78 &   83 &   86 &   89 &   94 &   97 &   98 &   99 &   -- &   -- &   -- &   -- &   -- &   -- &   -- &   -- &   -- &   --\\
$10^{-5}$ &   87 &   98 &  105 &  110 &  113 &  116 &  119 &  122 &  127 &  130 &   -- &   -- &   -- &   -- &   -- &   -- &   -- &   -- &   --\\
$10^{-6}$ &  105 &  118 &  125 &  132 &  137 &  140 &  145 &  150 &  153 &  156 &  159 &  160 &   -- &   -- &   -- &   -- &   -- &   -- &   --\\
$10^{-7}$ &  123 &  138 &  147 &  154 &  161 &  166 &  169 &  172 &  175 &  178 &  183 &  186 &  189 &  190 &   -- &   -- &   -- &   -- &   --\\
$10^{-8}$ &  141 &  158 &  169 &  178 &  185 &  190 &  195 &  198 &  203 &  206 &  209 &  212 &  215 &  218 &  219 &  220 &   -- &   -- &   --\\
$10^{-9}$ &  165 &  184 &  199 &  208 &  215 &  220 &  225 &  230 &  233 &  236 &  239 &  242 &  245 &  248 &  251 &  254 &  257 &   -- &   --\\
$10^{-10}$ &  183 &  206 &  219 &  228 &  235 &  242 &  247 &  252 &  259 &  264 &  267 &  272 &  275 &  278 &  281 &  284 &  287 &  290 &  291\\
\hline
\\
\multicolumn{20}{l}{{Majority-based sampling}}\\
\hline
$\epsilon$ $\backslash$ $m$ & $1$ & $2$ & $3$ & $4$ & $5$ & $6$ & $7$ & $8$ & $9$ & $10$ & $11$ & $12$ & $13$ & $14$ & $15$ & $16$ & $17$ & $18$ & $19$\\
\hline
$10^{-1}$ &   17 &   20 &   25 &   -- &   -- &   -- &   -- &   -- &   -- &   -- &   -- &   -- &   -- &   -- &   -- &   -- &   -- &   -- &   --\\
$10^{-2}$ &   29 &   34 &   43 &   44 &   49 &   -- &   -- &   -- &   -- &   -- &   -- &   -- &   -- &   -- &   -- &   -- &   -- &   -- &   --\\
$10^{-3}$ &   41 &   50 &   57 &   62 &   69 &   70 &   73 &   -- &   -- &   -- &   -- &   -- &   -- &   -- &   -- &   -- &   -- &   -- &   --\\
$10^{-4}$ &   55 &   68 &   73 &   80 &   83 &   88 &   93 &   96 &   97 &   -- &   -- &   -- &   -- &   -- &   -- &   -- &   -- &   -- &   --\\
$10^{-5}$ &   67 &   82 &   91 &   98 &  101 &  106 &  109 &  114 &  119 &  122 &   -- &   -- &   -- &   -- &   -- &   -- &   -- &   -- &   --\\
$10^{-6}$ &   79 &   96 &  107 &  114 &  121 &  124 &  131 &  136 &  141 &  144 &  147 &  148 &   -- &   -- &   -- &   -- &   -- &   -- &   --\\
$10^{-7}$ &   93 &  112 &  123 &  132 &  139 &  146 &  151 &  154 &  157 &  160 &  167 &  170 &  173 &  176 &   -- &   -- &   -- &   -- &   --\\
$10^{-8}$ &  105 &  126 &  141 &  150 &  159 &  166 &  171 &  174 &  181 &  184 &  189 &  192 &  195 &  198 &  199 &  200 &   -- &   -- &   --\\
$10^{-9}$ &  119 &  142 &  159 &  170 &  179 &  184 &  189 &  196 &  201 &  204 &  207 &  210 &  213 &  216 &  219 &  224 &  227 &   -- &   --\\
$10^{-10}$ &  133 &  160 &  173 &  186 &  193 &  200 &  209 &  214 &  221 &  226 &  229 &  234 &  239 &  244 &  247 &  250 &  253 &  256 &  257\\
\hline
\end{tabular}
\caption{Number of samples required to obtain an $2^{-(m+2)}$ accurate
  estimation of $\varphi$ with probability at least $1-\epsilon$ using
  triple-sign based sampling (top) and majority-based sampling
  (bottom). Dashed lines indicate the regime where a single extra
  measurement is needed for each successive $m$.}\label{Table:Mepsilon}
\end{table}

\section{Discussion}

In this work we have proposed and analyzed several sampling schemes
for use in quantum phase estimation based on Kitaev's algorithm, and
showed that using previous phase estimates to shift the phase can
reduce the number of measurements. Based on this we showed in
Section~\ref{Sec:PhaseShift} that we can obtain a theoretical sampling
complexity $N_{\epsilon} + m$ to obtain a $2^{-(m+2)}$ accurate
estimation of the phase $\varphi$ with probability at least
$\epsilon$. The proposed approach requires increasingly accurate
rotations, which may not be feasible in practice due to inherent
system noise or circuit complexity (see \cite{WIE2013Ka} for the
implementation of small rotations). Even with practical limitations on
the phase shift accuracy, as studied in more detail
in~\cite{AHM2012Ca}, the proposed sampling schemes can still reduce
the number of measurements, as shown, for example, in
Table~\ref{Table:Sign}.  From a theoretical point of view, having a
limited accuracy re-introduces a $\log(m)$ dependency in the
algorithmic complexity, and it will therefore be interesting to
analyze the application of the sampling schemes to the phase
estimation algorithm proposed in \cite{SVO2014HFa}.  Another potential
minor drawback of our approach is the dependency of each iteration
relies on the outcome of the previous one, thereby limiting the
potential parallelism to the independent measurements within each
iteration.

It remains to show that the maximum of $f_{n}(\alpha)$ in
\eqref{Eq:fn} over $[0,\pi/2]$ is attained at $\alpha = 0$. This would
confirm a sampling complexity of $2\log_2(1/\epsilon)$ for the
majority-based approach. This was verified for $n=1$ and $n=2$, and
Figure~\ref{Fig:MajorityError} strongly suggests this holds for all
$n$. Indeed, for $n=1$ we have
\[
f_1(\alpha) = (1 - \sin(\alpha) - \cos(\alpha) -
\sin(\alpha)\cos(\alpha)) / 4,
\]
which is convex over the given range due to concavity of the
trigonometric terms, and the result therefore follows from the
symmetry $f_n(\alpha) = f_n(\pi/2 - \alpha)$. Empirically, the error
functions for box-, wedge-, and majority-based sampling all exhibit
convexity or piecewise convexity. This may indicate a more general
relationship between the error over certain index sets $\mathcal{K}$
and $\alpha$.

\appendix
\section{Proof of Theorem~\ref{Thm:Delta}}\label{Sec:ProofThmDelta}

\begin{LabelTheorem}{\ref{Thm:Delta}}
  For any $0 \leq \eta \leq \pi/2$ we can compute an estimate
  $\tilde{\phi}$ of any $\phi \in [0,2\pi]$ with accuracy
  $\vert \tilde{\phi} - \phi\vert \leq \eta$ from sine and cosine
  estimates $\tilde{c}$ and $\tilde{s}$ with $\vert \tilde{c} -
  \cos(\phi)\vert \leq \delta$ and $\vert\tilde{s}
  -\sin(\phi)\vert \leq \delta$, whenever 
\begin{equation}%\label{Eq:BoundDelta}
\delta \leq \delta(\phi) = \frac{\sin(\eta)}{\sqrt{2}}.
\end{equation}
For uniform estimation over $\phi$ this bound is tight.
\end{LabelTheorem}
\begin{proof}
  For $\eta=0$ the result holds trivially with $\delta = 0$, and we
  therefore only need to consider $\eta > 0$. We can recover any
  $\phi$ with accuracy $\eta$ from approximate sine and cosine values
  $\tilde{c}$ and $\tilde{s}$ if and only if $(\tilde{c},\tilde{s})$
  lies within a wedge of angles between $\phi-\eta$ and $\phi+\eta$
  (illustrated by the shaded region in Figure~\ref{Fig:Delta}).  For
  $\delta$, this means that the square with sides $2\delta$ centered
  on $\phi$ must to lie within the wedge. For $0 < \eta \leq \pi/4$ we
  can assume without loss of generality that $\phi \in [0,\pi/4]$. It
  can be seen that the intersection of the top-left corner of the box,
  at $(\cos(\phi)-\delta, \sin(\phi)+\delta)$, with the boundary of
  the wedge at angle $\phi+\eta$ determines the maximum value of
  $\delta$.  Formalizing, we write $\delta(\phi)$ to indicate the
  dependence on $\phi$ and denote the wedge boundary as
  $x = \alpha(\phi) y$, with
\[
\alpha(\phi) = \frac{\cos(\phi+\eta)}{\sin(\phi+\eta)}.
\]
For
  $\delta$ to be valid we need
  $\cos(\phi) - \delta \geq \alpha(\phi) (\sin(\phi) +
  \delta)$, which can be rewritten as
\begin{equation}\label{Eq:DeltaPhi}
\delta \leq \delta(\phi) = \frac{\cos(\phi) - \alpha(\phi)\sin(\phi)}{1+\alpha(\phi)}
\end{equation}
We then need to minimize $\delta(\phi)$ over the given range of
$\phi$ to find the largest value of $\delta$ that applies for all
$\phi$. Abbreviating $\alpha = \alpha(\phi)$ and
gradient $\alpha' = \alpha'(\phi)$, we have
\begin{eqnarray}
\delta'(\phi) &=& -\frac{\sin(\phi)}{1+\alpha} -
\frac{\alpha\cos(\phi)}{1+\alpha} -
\frac{\alpha'\cos(\phi)}{(1+\alpha)^2} -
\sin(\phi)\left(\frac{\alpha'}{1 + \alpha} -
                     \frac{\alpha\alpha'}{(1+\alpha)^2}\right)\notag\\
& = & -\sin(\phi)\left(\frac{1}{1+\alpha} +
      \frac{\alpha'}{(1+\alpha)^2}\right) -
      \cos(\phi)\left(\frac{\alpha}{1+\alpha} + \frac{\alpha'}{(1+\alpha)^2}\right)\label{Eq:DeltaPrime1}
\end{eqnarray}
From
\[
\alpha' = -\frac{1}{\sin^2(\phi -
  \eta)},\quad\mbox{and}\quad
\alpha^2 = \frac{\cos^2(\phi + \eta)}{\sin^2(\phi+\eta)} =
\frac{1 - \sin^2(\phi + \eta)}{\sin^2(\phi+\eta)} =
\frac{1}{\sin^2(\phi + \eta)} - 1,
\]
it follows that $\alpha' + \alpha^2 = -1$, or $\alpha' = -1
-\alpha^2$, which allows us to simplify the sine coefficient as
\begin{equation}\label{Eq:SinCoef}
\frac{1}{1+\alpha} +
      \frac{\alpha'}{(1+\alpha)^2} =\frac{1+\alpha}{(1+\alpha)^2} -
      \frac{1 + \alpha^2}{(1+\alpha)^2} = \frac{\alpha(1 - \alpha)}{(1+\alpha)^2},
\end{equation}
whereas for the cosine coefficient we find
\begin{equation}\label{Eq:CosCoef}
\frac{\alpha}{1+\alpha} + \frac{\alpha'}{(1+\alpha)^2} =
\frac{\alpha + \alpha^2 + \alpha'}{(1+\alpha)^2} = \frac{(\alpha - 1)}{(1+\alpha)^2}.
\end{equation}
Substituting \eqref{Eq:SinCoef} and \eqref{Eq:CosCoef} in
\eqref{Eq:DeltaPrime1} gives
\begin{equation}\label{Eq:DeltaPrime2}
\delta'(\phi) = \cos(\phi)\frac{(1 - \alpha)}{(1+\alpha)^2} -
\alpha\sin(\phi)\frac{(1 - \alpha)}{(1+\alpha)^2}.
\end{equation}
Noting that $\eta > 0$ and considering the range of $\phi$, we
have $0 < \sin(\phi + \eta) \leq 1$. This allows us to multiply the
first term in \eqref{Eq:DeltaPrime2} by
$\sin(\phi+\eta)/\sin(\phi+\eta)$, and expand the enumerator in
this term using the sum formula as
\[
\sin(\phi+\eta) = \sin(\phi)\cos(\eta) + \cos(\phi)\sin(\eta).
\]
Finally, expanding the enumerator $\cos(\phi+\eta)$ in the $\alpha$
term preceding $\sin(\phi)$ as
\[
\cos(\phi+\eta) = \cos(\phi)\cos(\eta) - \sin(\phi)\sin(\eta),
\]
and simplifying gives
\begin{equation}\label{Eq:DeltaPrime3}
\delta'(\phi) = \frac{(1 - \alpha)}{(1+\alpha)^2\sin(\phi + \eta)}\sin(\eta).
\end{equation}
All terms in this expression, except $\alpha-1$, are strictly
positive. The gradient is therefore zero only when $\alpha = 1$, which
happens at $\phi^* = \pi/4 - \eta$. For $\phi < \phi^*$ we have
$\alpha(\phi) > 1$ and therefore $\delta'(\phi) < 0$, whereas for
$\phi > \phi^*$ we have $\alpha(\phi) < 1$ and $\delta'(\phi) > 0$,
which shows that $\phi^*$ gives a minimizer. Evaluating
$\delta(\phi^*)$ in \eqref{Eq:DeltaPhi} and noting that
$\alpha(\phi^*) = 1$ then gives
\[
\delta \leq \delta(\phi) = (\cos(\pi/4 - \eta) - \sin(\pi/4 - \eta)) / 2.
\]
To obtain the desired result, we simplify $\delta(\phi)$ using the sum formulas and
$\cos(\pi/4)=\sin(\pi/4) = \sqrt{2}/2$:
\begin{eqnarray*}
\delta(\phi) & = & \frac{1}{2}\left((\cos(\pi/4)\cos(\eta) +
                   \sin(\pi/4)\sin(\eta)) - (\sin(\pi/4)\cos(\eta) -
                   \cos(\pi/4)\sin(\eta))\right)\\
& = & \frac{\sqrt{2}}{4}\left((\cos(\eta) + \sin(\eta)) -
      (\cos(\eta)-\sin(\eta))\right) = \frac{1}{\sqrt{2}}\sin(\eta).
\end{eqnarray*}
For $\pi/4\leq \eta \leq \pi/2$ we can assume without loss of
generality that $\phi \in [-\pi/4,0]$. In this case the top-left
corner of the box can again be seen to limit $\delta$. The argument as
given above follows through as is, thus completing the proof.
\end{proof}

\section{Proof of Theorem~\ref{Thm:BoxConvex}}\label{Sec:ProofThmBoxConvex}

\begin{LabelTheorem}{\ref{Thm:BoxConvex}}
  Choose $\delta > 0$ and let $f_n(p) = 1 - \mathrm{Pr}(X \in \mathcal{K}_{n,\delta}(p))$ with
  $\mathcal{K}_{n,p}$ as defined in \eqref{Eq:SetKnp}. Then for $n \geq \max\{1 + 1/\delta^2,3\}$,
$f_n(p)$ is piecewise convex on $[0,1]$ with breakpoints at
  $[0,1] \cap \{(k/n) \pm \delta/2\}_{k\in[n]}$.
\end{LabelTheorem}
\begin{proof}
  From the definition of $\mathcal{K}_{n,\delta}(p)$, it is clear that
  $\mathcal{K}_{n,\delta}(p)$ remains constant precisely on the
  (open) segment between the stated breakpoints. Choose any segment, then for
  all values of $p$ within this segment, the error is obtained by
  summing $B(k; n,p)$ over $k\not\in\mathcal{K}_{n,\delta}(p)$, with
\[
B(k; n,p) = {n \choose k}p^k(1-p)^{n-k}.
\]
In order to prove convexity of the error over the segment, we show
that the each of the terms $B(k; n,p)$ is convex in $p$ over the
segment.  For conciseness we normalize with respect to the binomial
coefficient and work with $B_{k,n}(p) := B(k; n,p)/ {n \choose k}$.
For $n=2$, observe that the second derivative $B_{1,2}''(p) = -2$ is
negative, which means that $B_{1,2}(p)$ is concave. We therefore
require that $n \geq 3$. For $k=0$ and $k=n$ we find
\[
B_{0,n}''(p) = n(n-1)(1-p)^{n-2},\quad\mathrm{and}\quad
B_{n,n}''(p) = n(n-1)p^{n-2}.
\]
The second derivatives are nonnegative over the domain $p\in[0,1]$ and
the functions are therefore convex.
For $0 < k < n$ we have
\begin{eqnarray}
B_{k,n}'(p) &=& (k(1-p) - (n-k)(p))(p^{k-1}(1-p)^{n-k-1})\notag\\
&=& (k - np)\left(p^{k-1}(1-p)^{n-k-1}\right),\label{Eq:GradBkn}
\end{eqnarray}
and the gradient reaches zero when $p=0$, $p=1$, or $p = k/n$. For
$k=1$ we find
\begin{eqnarray*}
B_{1,n}''(p) & = & \left[-n(1-p) - (1-np)(n-2)\right](1-p)^{n-3}\\
& = & \left[np(n-1) - 2(n-1)\right](1-p)^{n-3}
\end{eqnarray*}
For convexity we want $B_{1,n}''(p) \geq 0$, and therefore require
that the square-bracketed term be nonnegative. Solving for $p$ then
gives convexity of $B_{1,n}(p)$ for $p \geq 2/n$. By symmetry, it
follows that for $k=n-1$, $B_{{n-1},n}(p)$ is convex for $p \leq 1 - 2/n$.
Finally, for $2 \leq k \leq n-2$ it follows from \eqref{Eq:GradBkn} that
\begin{eqnarray*}
B''_{k,n}(p) & = &
\left(-np(1-p) + (k-np)\left((k-1)(1-p) -
                   (n-k-1)p\right)\right)\cdot\left(p^{k-2}(1-p)^{n-k-2}\right)\\
& = & \left[n(n-1)p^2 - 2k(n-1)p + k(k-1) \right]\cdot\left(p^{k-2}(1-p)^{n-k-2}\right).
\end{eqnarray*}
The term in square brackets is a quadratic in $p$, and solving for the
roots gives
\[
p_{\pm} = \frac{k}{n}\pm \frac{\sqrt{k(n-1)(n-k)}}{n(n-1)}.
\]
The deviation is maximum at $k=n/2$, which gives
\[
\frac{\sqrt{k(n-1)(n-k)}}{n(n-1)} \leq \frac{(n/2)\sqrt{n-1}}{n(n-1)}
= \frac{1}{2\sqrt{n-1}}.
\]
The second derivative $B_{k,n}''$ is therefore guaranteed to be
nonnegative, and $B_{k,n}$ convex, when $p$ is at least
$1/(2\sqrt{n-1})$ away from the maximum at $k/n$. It can be verified
that the same sufficient condition applies for $k=0$ and $k=n$.

For any $p$ in the selected segment we know that
$\mathcal{K}_{n,\delta}(p)$ remains constant and that
$\abs{k/n - p} \geq \delta/2$ for any
$k\not\in\mathcal{K}_{n,\delta}(p)$. To guarantee convexity we
therefore require that
\[
\delta/2 \geq 1/(2\sqrt{n-1}),
\]
which simplifies to $n \geq 1 + 1/\delta^2$.
\end{proof}

\section{Proof of Theorem~\ref{Thm:MajorityBound}}\label{Sec:ProofThmMajorityBound}
\begin{LabelTheorem}{\ref{Thm:MajorityBound}}
Let $\mathcal{K}_n = \{(i,j) \mid i,j \in [0,n], j \geq n-i+1\}$, then
for all $\alpha \in [0,\pi/2]$
\[
1 - \mathrm{Pr}(\mathcal{K}_n \mid n,\alpha) \leq \frac{2}{2^n}.
\]
\end{LabelTheorem}
\begin{proof}
Denote by $\mathcal{E}_n = \{(i,j) \mid i,j \in [0,n], i+j \leq n\}$
the complement of $\mathcal{K}_n$. The error probability
$\mathrm{Pr}(\mathcal{E}_n\mid n, \alpha) = 1 -
\mathrm{Pr}(\mathcal{K}_n \mid n, \alpha)$ is then
obtained by summing $f_{i,j}$ over $(i,j) \in\mathcal{E}_n$, where
\[
f_{i,j}(\alpha) = {n\choose i}{n \choose j}p_x^i(1-p_x)^{n-i}p_y^j(1-p_y)^{n-j}.
\]
Defining the diagonal sums $k\in[0,n]$ as
\[
d_k(\alpha) = \sum_{i=0}^{k} f_{i,k-j}(\alpha),
\]
we can equivalently write
$\mathrm{Pr}(\mathcal{E}_n\mid n, \alpha) = \sum_{k=0}^nd_k(\alpha)$.
For $\alpha \in [0,\pi/2]$ it is easily seen that
$d_k(\alpha) = d_k(\pi/2 - \alpha)$, and it therefore suffices to show
the desired result for $\alpha \in [0,\pi/4]$. As a first step, we
bound the value of the main diagonal $d_n$ by $2^{-n}$:
\begin{eqnarray}
d_n(\alpha) & = & \sum_{j=0}^{n}{n\choose n-j}{n\choose
          j}p_x^{n-j}(1-p_x)^{j}p_y^j(1-p_y)^{n-j}\notag\\
& = & (p_x(1-p_y))^n \sum_{j=0}^{n}{n\choose j}^2
      \left(\frac{p_y(1-p_x)}{p_x(1-p_y)}\right)^j\notag\\
& = & (p_x(1-p_y))^n \sum_{j=0}^{n}\left({n\choose j}
      \left(\frac{p_y(1-p_x)}{p_x(1-p_y)}\right)^{j/2}\right)^2\notag\\
& \leq & (p_x(1-p_y))^n \left(\sum_{j=0}^{n}{n\choose j}
      \left(\frac{\sqrt{p_y(1-p_x)}}{\sqrt{p_x(1-p_y)}}\right)^j\right)^2\label{Eq:BoundDn}\\
& \stackrel{(i)}{=} &(p_x(1-p_y))^n \left(1 + \frac{\sqrt{p_y(1-p_x)}}{\sqrt{p_x(1-p_y)}}\right)^{2n}\notag\\
& \stackrel{(ii)}{=} &(p_x(1-p_y))^n \left(\frac{1}{2p_x(1-p_y)}\right)^n\notag\\
& =& 2^{-n},\notag
\end{eqnarray}
where (i) uses the binomial theorem and (ii) follows from the observation that
\begin{eqnarray*}
\left(1 + \frac{\sqrt{p_y(1-p_x)}}{\sqrt{p_x(1-p_y)}}\right)^{2}
& = &  \left(\frac{\sqrt{p_x(1-p_y)} + \sqrt{p_y(1-p_x)}}{\sqrt{p_x(1-p_y)}}\right)^{2}\\
& = & \frac{p_x(1-p_y) + 2\sqrt{p_x(1-p_x)p_y(1-p_y)} + p_y(1-p_x)}{p_x(1-p_y)}\\
& \stackrel{\eqref{Eq:4p(1-p)}}{=} & \frac{p_x(1-p_y) + 2\sqrt{\sin^2(\alpha)\cos^2(\alpha)/16} + p_y(1-p_x)}{p_x(1-p_y)}\\
& = & \frac{(1+\cos(\alpha)(1-\sin(\alpha))/4 + \sin(\alpha)\cos(\alpha)/2 + (1+\sin(\alpha))(1-\cos(\alpha))/4}{p_x(1-p_y)}\\
& = & \frac{1}{2p_x(1-p_y)}.\\
\end{eqnarray*}

\begin{figure}[t]
\centering
\setlength{\tabcolsep}{6pt}
\begin{tabular}{cccc}
\includegraphics[width=0.21\textwidth]{./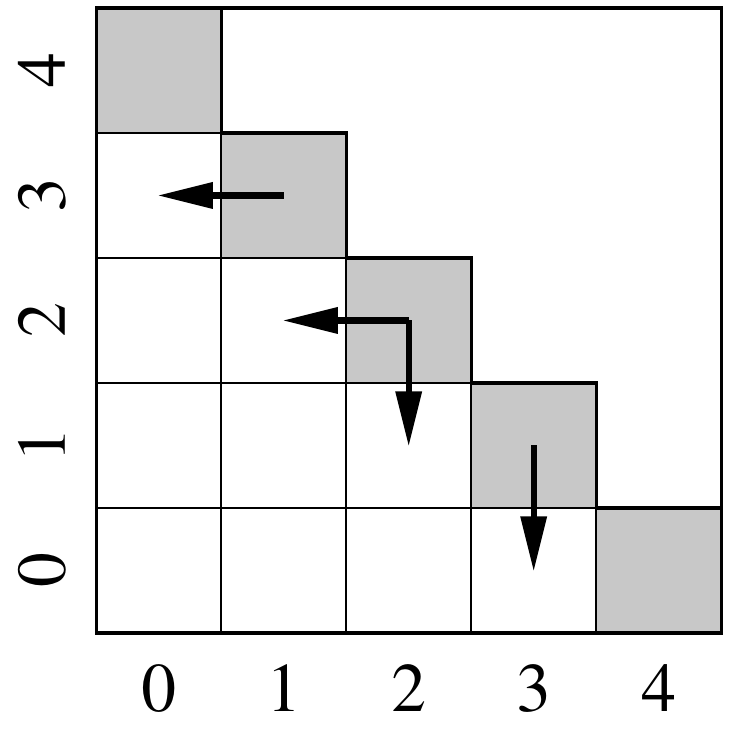}&
\includegraphics[width=0.21\textwidth]{./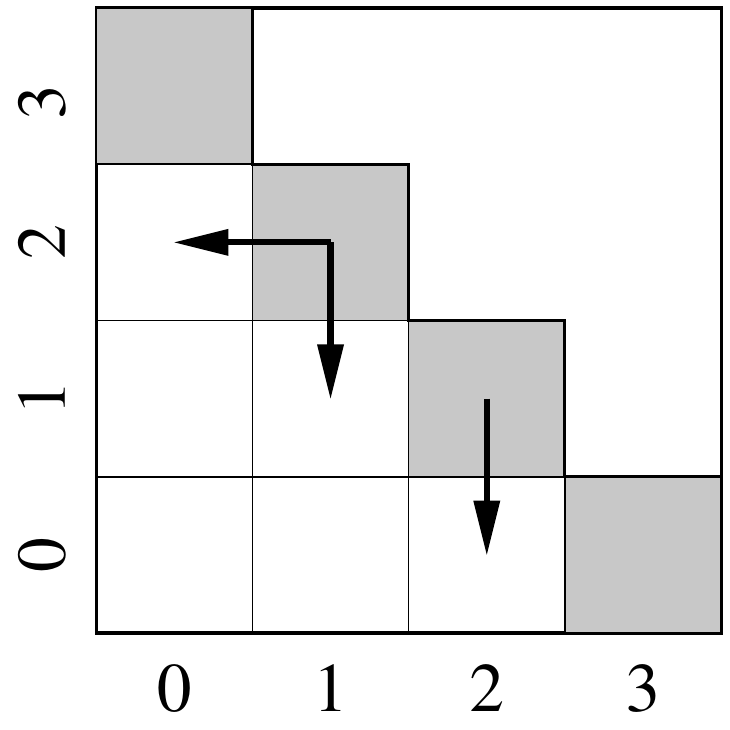}&
\includegraphics[width=0.21\textwidth]{./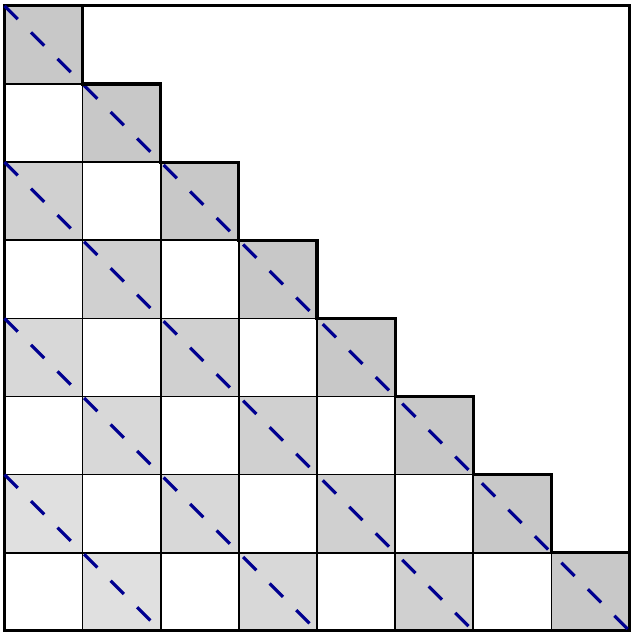}&
\includegraphics[width=0.21\textwidth]{./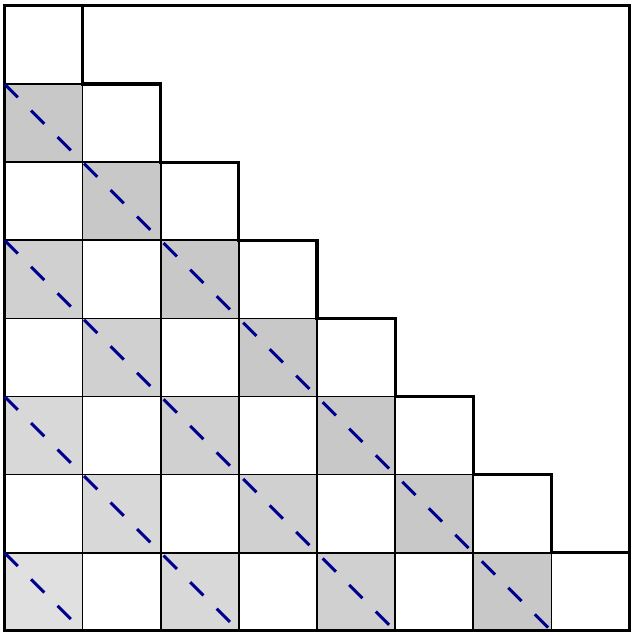}\\
({\bf{a}}) & ({\bf{b}}) & ({\bf{c}}) & ({\bf{d}})
\end{tabular}
\caption{Constructions for bounding (a,b) $d_{n-1}$ based on
  $d_{n}$, and (c,d) the sum of $d_k$ for strided $k$.}\label{Fig:GridMajority}
\end{figure}

For the second step we derive a bound on $d_{n-1}$ based on $d_n$, from
which we then obtain a bound on $d_{n-1} + d_n$. For $i \geq 1$ we
have
\[
f_{i-1,j} = f_{i,j}\cdot\frac{1-p_x}{p_x}\cdot\frac{i}{n-i+1}.
\]
The right-most term, which accounts for the change in the binomial
coefficient ${n\choose i}$, is less than or equal to 1 for $i
\leq n/2$ when $n$ is even, and for $i \leq (n+1)/2$ when $n$ is odd.
A similar argument applies for the transition from $f_{i,j}$ to
$f_{i,j-1}$ for $1 \leq j \leq (n+1)/2$, allowing us to bound the elements
on the $(n-1)$-diagonal $d_{n-1}$ as follows:
\[
f_{i,j} \leq \begin{cases}
\displaystyle \frac{1 - p_x}{p_x}f_{i+1,j} & i < (n-1)/2,\\[14pt]
\displaystyle \frac{1 - p_y}{p_y}f_{i,j+1} & \mathrm{otherwise}.
\end{cases}
\]
As illustrated in Figures~\ref{Fig:GridMajority}(a) and (b), this
approach uses the middle element of the main diagonal twice. Taking
this into account, and effectively doing the same for all elements, we
have
\begin{equation}\label{Eq:BoundDn-1}
d_{n-1} \leq \left(\frac{1-p_x}{p_x} + \frac{1-p_y}{p_y}\right)d_n =
\frac{p_y(1-p_x) + p_x(1-p_y)}{p_xp_y}d_n.
\end{equation}
Combining \eqref{Eq:BoundDn} and \eqref{Eq:BoundDn-1} we have
\begin{equation}\label{Eq:BoundSumDn}
d_n + d_{n-1} \leq \frac{p_x + p_y - p_xp_y}{p_xp_y}\cdot 2^{-n}.
\end{equation}

As the third step, we derived bound on $d_{k-2}$ based on
$d_{k}$. Consider any diagonal $2\leq k \leq n$, with $0 < i < k$ and
$j = k-i$, then
\[
{n\choose i-1}{n\choose j-1} = {n\choose i-1}{n\choose k-i-1} = \frac{i}{n-i+1}\cdot\frac{k-i}{n-k+i+1}{n\choose i}{n\choose k-i}
= \kappa {n\choose i}{n\choose j}
\]
Since $k \leq n$, the multiplicative term $\kappa$ satisfies
\[
\kappa = \frac{i(k-i)}{i(k-i) + n(n-k+1) - k+2} \leq \frac{i(k-i)}{i(k-i) + n -
  k+2} \leq \frac{i(k-i)}{i(k-i) +2} < 1.
\]
It therefore follows that
\[
f_{i-1,k-i-1} \leq f_{i,k-i} \cdot \frac{(1 - p_x)}{p_x}\cdot\frac{(1-p_y)}{p_y}.
\]
The transition from diagonal $k$ to $k-2$ follows by summing over all
elements $i+j = k-2$, giving
\[
d_{k-2} \leq \frac{(1 - p_x)(1 - p_y)}{p_xp_y} d_{k} = \tau d_{k},
\]
with $\tau < 1$, as shown in Figure~\ref{Fig:Alpha}(a). As a fourth
step we sum over the even and odd diagonals. Starting at $k = n$ or
$k = n-1$ we have
\[
\sum_{i=0}^{k/2} d_{k-2i}\leq
\sum_{i=0}^{\infty}\tau^id_k = \frac{1}{1 - \tau}d_k = \frac{p_xp_y}{p_x +
  p_y - 1}\cdot d_k
\]
For the sum of the diagonals, and hence that $f_{i,j}$ over the error set set $\mathcal{E}_n$,
it follows from \eqref{Eq:BoundSumDn} that
\[
\sum_{k=0}^{n} d_k \leq
\frac{p_xp_y}{p_x + p_y - 1}(d_n + d_{n-1})
\leq \frac{p_xp_y}{p_x + p_y - 1}\cdot\frac{p_x + p_y - p_xp_y}{p_xp_y}\cdot 2^{-n}
= \frac{p_x + p_y - p_xp_y}{p_x + p_y - 1}\cdot 2^{-n}.
\]
The desired result then follows from the observation that $(p_x + p_y -
p_xp_y)/ (p_x + p_y - 1) \leq 2$, as illustrated in Figure~\ref{Fig:Alpha}(b).
\end{proof}
\begin{figure}
\centering
\begin{tabular}{cc}
\includegraphics[height=140pt]{./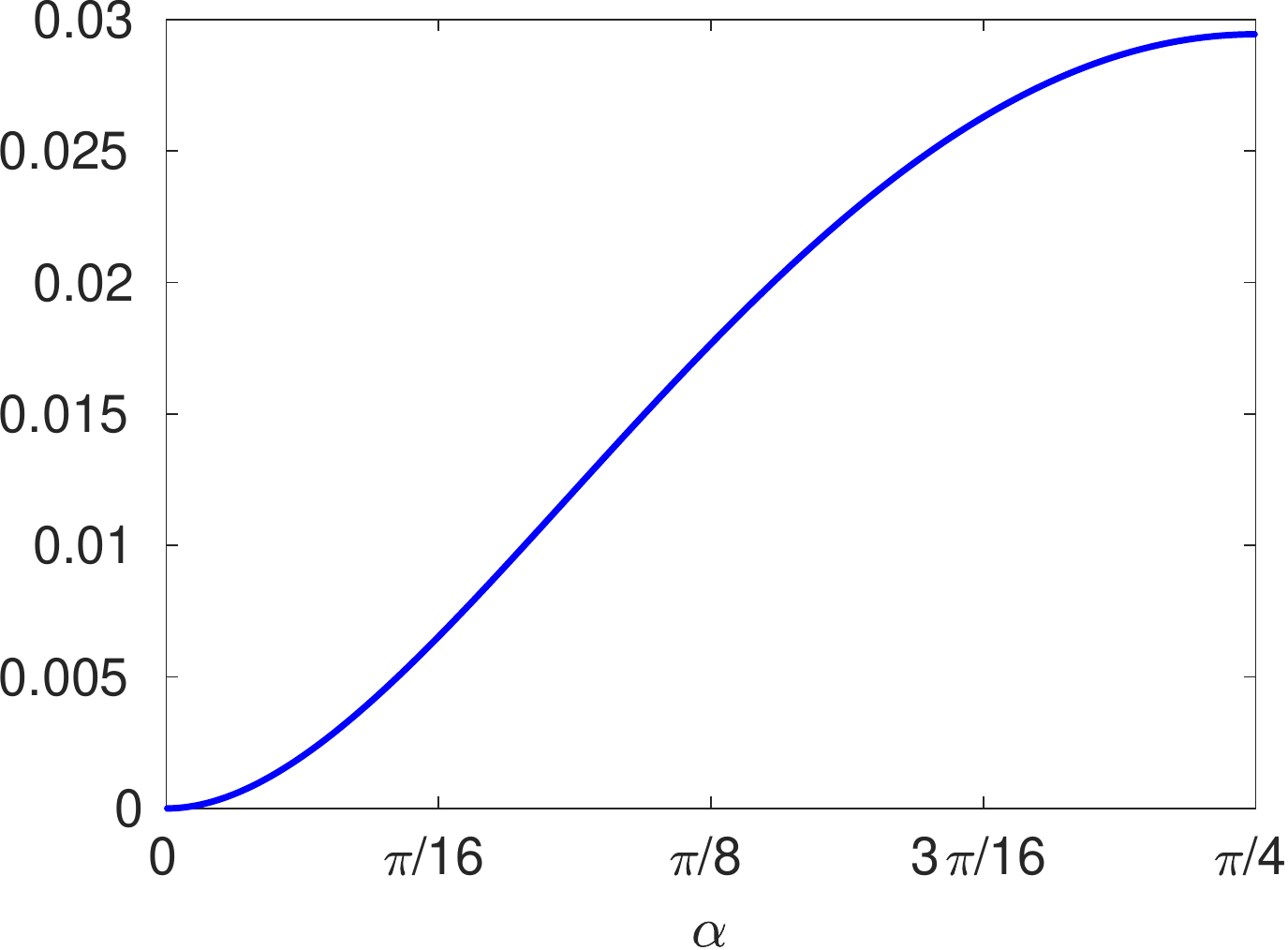} &
\includegraphics[height=140pt]{./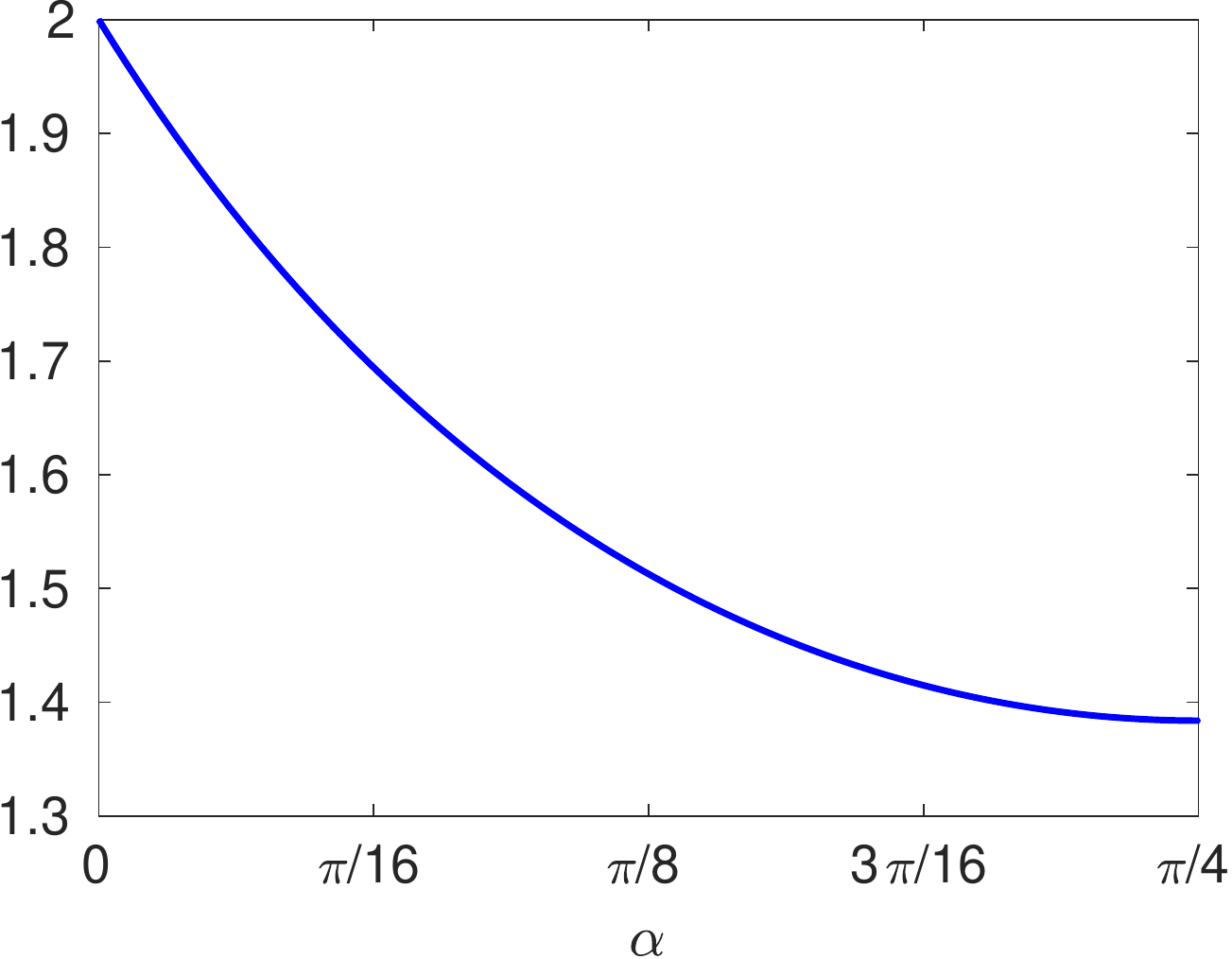} \\
({\bf{a}}) $\frac{(1-p_x)(1-p_y)}{p_xp_y}$ &
({\bf{b}}) $\frac{p_x + p_y - p_xp_y}{p_x + p_y - 1}$
\end{tabular}
\caption{Plots of key quantities used in the proof of Theorem~\ref{Thm:MajorityBound}.}\label{Fig:Alpha}
\end{figure}

\bibliography{bibliography}

\end{document}